\documentclass[a4paper,aps,pra,notitlepage,superscriptaddress,longbibliography,nofootinbib,twocolumn,10pt,reprint]{revtex4-2}
\usepackage[utf8]{inputenc}
\usepackage{amsmath,amsthm,amssymb,amsfonts}
\usepackage{braket}
\usepackage{graphicx}
\usepackage[breaklinks=true]{hyperref}
\hypersetup{
  colorlinks   = true, 
  urlcolor     = blue, 
  linkcolor    = blue, 
  citecolor    = red 
}
\usepackage[bb=boondox]{mathalfa}
\usepackage{yhmath} 
\usepackage{multirow}
\usepackage{mathtools}
\usepackage{soul} 

\usepackage{array}
\newcolumntype{C}[1]{>{\centering\let\newline\\\arraybackslash\hspace{0pt}}m{#1}}

\usepackage{tikz,tikz-3dplot,pgfplots,tikz-cd}
\usetikzlibrary{shapes,arrows,patterns,3d}
\usetikzlibrary{decorations.markings,arrows,decorations.pathreplacing,calc,intersections,through,backgrounds}

\newcommand{\ccaption}[2]{\caption[#1]{\textit{#1.} #2}}

\usepackage{mathdots}

\DeclareMathOperator*{\Moplus}{\text{\raisebox{0.25ex}{\scalebox{0.7}{$\bigoplus$}}}}
\newcommand{\spare}[1]{\left[ {#1} \right]}

\newtheorem*{result1}{Result 1}
\newtheorem*{result2}{Result 2}
\newtheorem*{result3}{Result 3}

\newtheorem{lemma}{Lemma}

\definecolor{gray}{gray}{0.5}
\definecolor{G}{rgb}{0.6,0,0}
\definecolor{J}{rgb}{0.6,0,0.6}
\definecolor{O}{rgb}{0,0,0.6}
\definecolor{dgreen}{rgb}{0,0.6,0}
\definecolor{dred}{rgb}{0.6,0,0}
\definecolor{lred}{rgb}{1,.8,.8}

\newcommand{\ii}{\mathrm{i}}
\newcommand{\ie}{{\it i.e.},\ }
\newcommand{\eg}{{\it e.g.},\ }

\newcommand{\id}{\mathbb{1}} 
\newcommand{\tr}{\operatorname{tr}}
\newcommand{\Tr}{\operatorname{Tr}}

\newcommand{\adrep}[1]{\mathrm{ad}_{#1}}
\newcommand{\ADrep}[1]{\mathrm{Ad}_{#1}}

\newcommand{\bfcase}[2]{\begin{cases}{#1}&\textbf{(bosons)}\\{#2}&\textbf{(fermions)}\end{cases}}
\newcommand{\re}[1]{\mathrm{Re}(#1)}
\newcommand{\im}[1]{\mathrm{Im}(#1)}

\newcommand{\tp}{\intercal}

\pgfplotsset{compat=1.18}

\usepackage{makecell}
\newcommand{\iii}{\mathbb{i}}
\newcommand{\cetimes}{\,\tilde{\times}\,}

\begin{document}

\title{Representation theory of inhomogeneous Gaussian unitaries}

\author{Jingqi Sun}
\email{jingqis6@student.unimelb.edu.au}
\affiliation{School of Mathematics and Statistics, The University of Melbourne, Parkville, VIC 3010, Australia}

\author{Joshua Combes}
\email{josh.combes@unimelb.edu.au}
\affiliation{School of Mathematics and Statistics, The University of Melbourne, Parkville, VIC 3010, Australia}
\affiliation{School of Physics, The University of Melbourne, Parkville, VIC 3010, Australia}

\author{Lucas Hackl}
\email{lucas.hackl@unimelb.edu.au}
\affiliation{School of Mathematics and Statistics, The University of Melbourne, Parkville, VIC 3010, Australia}
\affiliation{School of Physics, The University of Melbourne, Parkville, VIC 3010, Australia}

\begin{abstract}
Gaussian unitaries, generated by quadratic Hamiltonians, are fundamental in quantum optics and continuous-variable computing. Their structures correspond to symplectic (bosons) and orthogonal (fermions) groups, but physical realizations give rise to their respective double covers, introducing phase and sign ambiguities. The homogeneous (quadratic-only) case has been resolved through a parameterization constructed in a recent work [arXiv:2409.11628]. We extend the previous framework to inhomogeneous Gaussian unitaries parameterized by $(M,z,\Psi)$. The Baker-Campbel-Hausdorff formula allows us then to factor any Gaussian unitary into a squeezing and a displacement transformation, from which we derive the group multiplication law.
\end{abstract}

\maketitle

\section{Introduction}\label{sec:intro}
Gaussian states are quantum states whose Wigner functions take a Gaussian form in phase space. They play a foundational role in quantum optics~\cite{gerry2023introductory} and continuous-variable quantum information~\cite{wang2007quantum,weedbrook2012gaussian,adesso2014continuous,holevo2019quantum}. Gaussian states also arise in condensed matter physics~\cite{bogoliubov1947theory,bogoljubov1958new}, quantum field theory in curved spacetime~\cite{wald1994quantum,parker2009quantum,derezinski2013mathematics}, and relativistic quantum information~\cite{Adesso_2012,Friis_2013}. Gaussian unitary transformations, generated by exponentials of quadratic Hamiltonians, preserve the Gaussian character of quantum states. Moreover, such unitaries underpin key protocols including quantum teleportation, squeezing, and Boson Sampling~\cite{weedbrook2012gaussian,aaronson2011computational}. Early works by Holevo and Werner clarified the mathematical structure of Gaussian states and channels~\cite{holevo2012quantum,werner2001gaussian}, while more recent developments have explored their use in quantum simulation, control, and communication~\cite{usenko2025cvqc,q36d-w649}.

Beyond practical applications, recent research has highlighted the geometric and phase properties of Gaussian unitaries, particularly through the lens of the symplectic and metaplectic groups~\cite{Dereziński_Gérard_2023,Woit:2017vqo,Simon1993,folland1989harmonic,degosson2006symplectic}, linked to structures such as the Maslov index and semiclassical analysis~\cite{littlejohn1986semiclassical,cahill1999density}. Previous works derived closed-form expressions of the double cover group multiplication rule using phase space formalism for bosons in~\cite{PhysRevA.61.022306} and for fermions in~\cite{Bravyi_2017}. Notably, Dias and König examine the parametrization and global phases arising in superpositions of Gaussian states~\cite{Dias:2023arXiv230712912D,Dias:2024arXiv240319059D}, although their analysis does not explicitly address the group multiplication law for Gaussian unitaries. More recently, Quesada developed methods for computing vacuum-to-vacuum amplitudes in quadratic bosonic evolutions, offering new analytical techniques for phase characterization in bosonic quantum dynamics~\cite{Quesada2025}.

A system of $N$ bosons can be characterized by their $2N$ linear observables
\begin{align}\label{eq:xi}
    \hat{\xi}^a&\equiv(\underbrace{\hat{q}_1,\cdots,\hat{q}_N}_\text{position},\underbrace{\hat{p}_1,\cdots,\hat{p}_N}_\text{momentum})\,,
\end{align}
also known as quadrature operators, which we collect in a single vector $\hat{\xi}^a$. A general quadratic Hamiltonian (GQH) for bosons takes the form
\begin{align}\label{eq:GQHbosons}
    \hat{H}=\underbrace{\tfrac{1}{2}h_{ab}\hat{\xi}^a\hat{\xi}^b}_\text{quadratic}+\underbrace{f_a\hat{\xi}^a}_\text{linear}+\underbrace{c\id}_\text{scalar}\,.
\end{align}
Its exponential $\mathcal{U}=e^{-\ii\hat{H}}$ and their products for different $\hat{H}$ have the property to map Gaussian states to Gaussian states. They generate the Lie group we call `inhomogeneous' Gaussian unitaries to emphasize that $\hat{H}$ is allowed to include a linear term. The action of such an inhomogeneous Gaussian unitary can be conveniently encoded via symplectic group element $M^a{}_b$ and a displacement vector $z^a$, such that
\begin{align}
    \mathcal{U}^\dagger \hat{\xi}^a\mathcal{U}=M^a{}_b\hat{\xi}^b+z^a\,,
    \label{eq:rep-formula}
\end{align}
which determines $\mathcal{U}$ up to a complex phase. While knowing this phase is not demanded for all physical applications, there are a number of scenarios, such as describing the superposition of Gaussian states~\cite{10.21468/SciPostPhys.17.3.082,Marshall:23,PRXQuantum.2.040315,PhysRevA.47.5024} and controlled Gaussian operations~\cite{PhysRevD.111.105031,PhysRevD.96.085012}, where this phase is crucial. Moreover, this phase is also critical when characterizing the group multiplication law without representing $\mathcal{U}$ as operators acting on an infinite-dimensional Hilbert space. The previous~\cite{Hackl_2024} that we build on characterized `homogeneous' Gaussian unitaries, namely those generated by purely quadratic Hamiltonians (PQH) where the linear term $f_a$ in~\eqref{eq:GQH} vanishes.

In this manuscript, we define the complex phase $\Psi$ of $\mathcal{U}$ with respect to a Gaussian state $\ket{J}$ as $\Psi^*=\braket{J|\mathcal{U}|J}/|\braket{J|\mathcal{U}|J}|$ for bosons and then derive the general group multiplication law
\begin{align}
    \mathcal{U}(M_1,z_1,\Psi_1)\,\mathcal{U}(M_2,z_2,\Psi_2)=\mathcal{U}(M_3,z_3,\Psi_3)\,.
\end{align}
Where $M_3=M_1M_2$ and $z_3=z_1+M_1z_2$ follow readily from the representation theory via~\eqref{eq:rep-formula}. The relation for the phase
\begin{align}
    \Psi_3=\Psi_1\Psi_2 e^{\ii \zeta(M_1,M_2,z_1,z_2)}
\end{align}
is less straightforward and constitutes one of the important results from this manuscript. Here $\zeta$ is the \emph{inhomogeneous cocycle function} which tracks the phases of unitary product. Furthermore, we show how deduce $\Psi$ from $\hat{H}$ when $\mathcal{U}(M,z,\Psi)=e^{-\ii \hat{H}}$. Finally, we treat the equivalent case for fermions, where Gaussian unitaries form are related to the orthogonal matrices $M^a{}_b$, and discuss where $M^a{}_b$ is not connected to the identity due to $\det{M}=-1$.\\

This manuscript is structured as follows: In Section~\ref{sec:method}, we review the relevant techniques and prior work, along with the fundamental ingredients necessary for extending the parametrization. We also elaborate on the motivation, rationale, and conceptual framework for this extension. In Section~\ref{sec:results}, we synthesize these components to derive a complete expression for the inhomogeneous Gaussian unitary transformation. This includes the characterization of the Gaussian unitaries generated by a general quadratic Hamiltonian (GQH), their associated global complex phases, and the full composition rules governing their group multiplication. Section~\ref{sec:cases} presents numerical verifications and visualizations of our results for one bosonic mode, particularly on the evolution of the cocycle function and complex phases generated by Hamiltonians. We conclude in Section~\ref{sec:conc} with a discussion of future directions and potential applications, particularly in the contexts of phase estimation and continuous-variable quantum computing.

\section{Homogeneous Gaussian states and displacements}\label{sec:method}

After introducing our notation, we give a concise review of the parametrization of homogeneous Gaussian unitaries proposed in~\cite{Hackl_2024} and discuss the displacement group. These groups will then be combined in section~\ref{sec:results}.

\subsection{Operators and structures}\label{sec:2:structures}

The notation for operators and structures we develop below correspond directly to experimental objects like beam splitters, squeezers, and displacements.

We consider a bosonic or fermionic system with $N$ modes. Classically, the system is described by a phase space $V\simeq\mathbb{R}^{2N}$ and its dual space $V^*$. We use abstract index notation~\cite[appendix~A]{hackl2020geometry} to denote vectors by $v^a\in V$ and dual vectors by $w_a\in V^*$. The quantum phase space can be constructed from the linear observables~\eqref{eq:xi} given by the operator-valued vector $\hat{\xi}^a\equiv(\hat{q}_1,\cdots,\hat{q}_N,\hat{p}_1,\cdots,\hat{p}_N)$. These are known as quadratures satisfying the canonical commutation relations $[\hat{q}_i,\hat{p}_j]=\ii\delta_{ij}$ or as Majorana modes satisfying the canonical anti-commutation relations $\{\hat{q}_i,\hat{p}_j\}=\ii\delta_{ij}$ for fermions. These relations can be combined into $\Lambda^{ab}$ with inverse $\lambda_{ab}$
\begin{align}\label{eq:com-og}
    \Lambda^{ab}=\bfcase{\hspace{1.4mm}[\hat{\xi}^a,\hat{\xi}^b]=\ii\Omega^{ab}}{\{\hat{\xi}^a,\hat{\xi}^b\}=G^{ab}}\,,
\end{align}
where $\Omega$ is a symplectic form and $G$ a positive-definite metric whose matrix representations in above basis is
\begin{align}\label{eq:def:matrix}
    \Omega^{ab}\equiv\begin{pmatrix}0&\id\\-\id&0\end{pmatrix}\quad\text{and}\quad G^{ab}\equiv\begin{pmatrix}\id&0\\0&\id\end{pmatrix}\,.
\end{align}
Their inverses are denoted by $\omega_{ab}$ and $g_{ab}$, respectively, with $\Omega^{ac}\omega_{cb}=G^{ac}g_{cb}=\delta^a{}_b$. For future use, we also introduce the symbol
\begin{align}
    \iii=\bfcase{\ii}{1}\,.
\end{align}
More details are in table~\ref{tab:matrices} and appendix~\ref{ap:A:operator}.

The most general quadratic Hamiltonian (GQH) is
\begin{align}\label{eq:GQH}
    \hat{H}=\bfcase{\frac{1}{2}h_{ab}\hat{\xi}^a\hat{\xi}^b+f_a\hat{\xi}^a+c\id}{\frac{\ii}{2}h_{ab}\hat{\xi}^a\hat{\xi}^b+c\id},
\end{align}
where $h_{ab}$ in above basis must be real and symmetric for bosons, while real and antisymmetric for fermions. In addition, there is an imaginary multiplier $\ii$ on the quadratic term for fermions. The linear term, determined by the dual vector $f_a$, only exists for bosons, as treating the same case for fermions will either break the fermionic super selection rule (parity) or require $f_a$ to be Grassmann-valued, which would not be physical.

We refer to a Hamiltonian of the form~\eqref{eq:GQH} with $f=0$ and $c=0$ as purely quadratic Hamiltonian (PQH). We will see that they form a faithful representation of the symplectic or special orthogonal Lie algebra for bosons and fermions, respectively. We define these Lie algebras and their associated Lie groups in their fundamental representation acting on phase space as
\begin{subequations}
\begin{align}
    \mathfrak{g}&=\left\{K\in\mathfrak{gl}(2N,\mathbb{R})\,\big|\,K\Lambda=-\Lambda K^\tp\right\}\,,\\
    \mathcal{G}&=\left\{M\in\mathrm{GL}(2N,\mathbb{R})\,\big|\,M\Lambda M^\intercal=\Lambda\right\}\,,
\end{align}
\end{subequations}
summarized in table~\ref{tab:structures}.

\begin{table}[t]
    \centering
    \renewcommand{\arraystretch}{1.5}
\begin{tabular}{c c c} 
\hline
\hline
 \textbf{structures} & \textbf{bosons} & \textbf{fermions} \\
 \hline
 Lie algebras $\mathfrak{g}$ & $\mathfrak{sp}(2N,\mathbb{R})$ & $\mathfrak{so}(2N,\mathbb{R})$ \\
 Lie groups $\mathcal{G}$ & $\mathrm{Sp}(2N,\mathbb{R})$ & $\mathrm{O}(2N,\mathbb{R})$ \\
 double covers $\widetilde{\mathcal{G}}$ & $\mathrm{Mp}(2N,\mathbb{R})$ & $\mathrm{Pin}(2N,\mathbb{R})$ \\
\hline
tensor $\Lambda^{ab}$ & $\ii\Omega^{ab}=[\hat{\xi}^a,\hat{\xi}^b]$ & $G^{ab}=\{\hat{\xi}^a,\hat{\xi}^b\}$ \\
    $\lambda_{ab}=(\Lambda^{-1})_{ab}$ & $-\ii\omega_{ab}$ & $g_{ab}$ \\
    $\iii$ & $\ii$ & $1$ \\
\hline
\hline
    \textbf{operators} & \multicolumn{2}{c}{\textbf{unified notation}} \\
    \hline
    linear (vector) & \multicolumn{2}{c}{$\widehat{z}=-z^a\lambda_{ab}\hat{\xi}^b$} \\
    linear (co-vector) & \multicolumn{2}{c}{$\widehat{w}=-\iii w_a\hat{\xi}^a$} \\
    quadratic & \multicolumn{2}{c}{$\widehat{K}=\frac{1}{2}\lambda_{ac}K^c{}_b\hat{\xi}^a\hat{\xi}^b$} \\
\hline
\hline
\end{tabular}
\ccaption{Algebraic structures and boson-fermion unified notation}{The Lie algebras $\mathfrak{g}$ that quadratic Hamiltonians $\hat{H}$ form, corresponding Lie groups $\mathcal{G}$ and the double covers $\widetilde{\mathcal{G}}$ that $\exp(-\ii\hat{H})$ form. Unified notation allows the most of derivations of linear and quadratic operators to be merged into a consistent form for both bosons and fermions.}
\label{tab:structures}
\renewcommand{\arraystretch}{1}
\end{table}

\subsection{Homogeneous Gaussian unitaries}\label{sec:2:hom-gu}

Here, we summarize the key results of~\cite{Hackl_2024,Hackl_2021} concerning \emph{homogeneous} Gaussian unitaries, generated by PHQ. This will be required to generalize the construction to inhomogeneous unitaries, generated by GQH, which also include displacements.

\subsubsection{Squeezing operators}

In the following, we show that PQHs form a faithful representation of the Lie algebra $\mathfrak{g}$. However, their exponentials realize only a projective representation of the Lie group $\mathcal{G}$ because of phase ambiguities in the group composition. To obtain a true (linear) representation, we must instead consider the double cover group $\widetilde{\mathcal{G}}$.

We consider the space generated by PQHs in the form
\begin{align}\label{eq:Khat}
    \widehat{K}=-\ii\hat{H}=\tfrac{1}{2}\lambda_{ac}K^c{}_b\hat{\xi}^a\hat{\xi}^b\,,
\end{align}
where $K^c{}_b=\frac{1}{\iii}\Lambda^{ca}h_{ab}$ lies in the respective Lie algebra $\mathfrak{g}$. Using~\eqref{eq:com-og} and BCH, one finds the relation
\begin{align}
    [\widehat{K}_1,\widehat{K}_2]&=\widehat{[K_1,K_2]}\,,\\
    e^{-\widehat{K}}\hat{\xi}^ae^{\widehat{K}}&=(e^{K})^a{}_b\hat{\xi}^b\,.
\end{align}
The first equation demonstrates that $\widehat{K}$ forms a Lie algebra
representation, while the second line shows that all $e^{\widehat{K}}$ generate a (potentially projective) representation of the Lie group $\mathcal{G}$.

\textbf{Why we need the double cover:} For a single bosonic or fermionic degree of freedom, we consider
\begin{align}
    K=\begin{pmatrix}
        0 & 1\\
        -1 & 0
    \end{pmatrix},\,\, \widehat{K}=\bfcase{-\frac{\ii}{2}(\hat{q}^2+\hat{p}^2)}{+\hat{q}\hat{p}}
\end{align}
corresponding to $\widehat{K}=-\ii(\hat{n}\pm \frac{1}{2})$, where $\hat{n}$ has spectrum $\mathbb{Z}_{\geq0}$ for bosons ($+$) and $\{0,1\}$ for fermions ($-$). To connect the algebra element $K$ with its group action, we exponentiate both generators. This yields the classical symplectic transformation and its quantum counterpart
\begin{align}
   \hspace{-2.5mm} M(t)\!=\!e^{tK}\!=\!\begin{pmatrix}
    \cos{t} & \sin{t}\\
    -\!\sin{t} & \cos{t}
    \end{pmatrix},\, \mathcal{S}(t)\!=\!e^{t\widehat{K}}\!=\!e^{-\ii t (\hat{n}\pm \frac{1}{2})}.
\end{align}
Here $\mathcal{S}(t) = e^{t\widehat{K}}$ denotes the unitary operator generated by the quadratic Hamiltonian $\widehat{K}$, representing the quantum analogue of the classical symplectic transformation $M(t) = e^{tK}$. While $M(t)$ is a rotation in the plane and clearly $2\pi$-periodic, we find $\mathcal{S}(2\pi)=-\id$, so this unitary operator is $4\pi$-periodic.

The same argument also works for several degrees of freedom~\cite{Hackl_2024} and establishes that for each group element $M\in\mathcal{G}$, there exist two unitaries $\pm\mathcal{S}(M)$ in the set generated by all $e^{\widehat{K}}$ with
\begin{align}\label{eq:fix-S(M)-up-to-sign}
    \mathcal{S}^\dagger(M)\hat{\xi}^a\mathcal{S}(M)=M^a{}_b\hat{\xi}^b\,.
\end{align}
Note that this equation does not fix the complex phase or sign of $\mathcal{S}(M)$.

\subsubsection{Reference complex phase and Gaussian states}

We have just seen that $M\in\mathcal{G}$ in~\eqref{eq:fix-S(M)-up-to-sign} determines a given unitary only up to a complex phase. In order to compare different such unitaries, we somehow need to assign a reference complex phase. We will do this by fixing a reference quantum state $\ket{J}$ and then associating to each Gaussian unitary $\mathcal{U}$ the complex phase of the expectation value $\braket{J|\mathcal{U}|J}$, provided it does not vanish.

We choose this reference quantum state to be a so-called Gaussian state that is fully determined by a complex structure $J: V\to V$ satisfying the following properties: (i) $J^2=-\id$, (ii) $J\Lambda J^\intercal=\Lambda$ and (iii) for bosons we require $-J\Omega$ to be positive-definite. These conditions can be encoded in the 2-out-of-3 property via
\begin{align}\label{eq:Kaehler-relation}
    G\omega=-J\quad\Leftrightarrow\quad J=\Omega g\,,
\end{align}
where we introduce the missing structure $G$ for bosons and $\Omega$ for fermions by solving~\eqref{eq:Kaehler-relation}. The conditions (i)-(iii) for $J$ to define a Gaussian state $\ket{J}$ is then equivalent to requiring that $G$ is a positive-definite metric for bosons and $\Omega$ a symplectic form for fermions, respectively. The triple $(\Omega,G,J)$ is called Kähler structure and there always exists a basis, such that their matrix forms are
\begin{align}\label{eq:Kaehler-standard-forms}
    \Omega=\left(\begin{array}{c|c}
       0  &  \id\\
         \hline
       -\id  & 0
    \end{array}\right)\,,\,G=\left(\begin{array}{c|c}
       \id  &  0\\
         \hline
       0  & \id
    \end{array}\right)\,,\,J=\left(\begin{array}{c|c}
       0  &  \id\\
         \hline
       -\id  & 0
    \end{array}\right)\,.
\end{align}

Given such $J$, we define $\ket{J}$ as any normalized solution to the equation
\begin{align}\label{eq:annihilation-op}
    \tfrac{1}{2}(\delta^a{}_b+\ii J^a{}_b)\hat{\xi}^b\ket{J}=0\,,
\end{align}
which one can show to determine $\ket{J}$ uniquely up to a complex phase~\cite{hackl2021bosonic}. For any operator $\mathcal{A}$ we can then introduce the \emph{reference complex phase}
\begin{align}
    \Phi_J[\mathcal{A}]=\frac{\braket{J|\mathcal{A}|J}}{|\braket{J|\mathcal{A}|J}|}\,,
\end{align}
provided $\braket{J|\mathcal{A}|J}\neq 0$. Note that this is invariant under changing the complex phase of $\ket{J}$.

With respect to $J$, we can always decompose a linear map $K: V\to V$ into its linear part $C_M$ and anti-linear part $D_M$, as
\begin{subequations}
\begin{align}
    C_M&=\tfrac{1}{2}(M-JMJ)\,,\\
    D_M&=\tfrac{1}{2}(M+JMJ)\,,
\end{align}
\end{subequations}
such that $M=C_M+D_M$ and satisfies $[C_M,J]=0$ and $\{D_M,J\}=0$. If $C_M$ is invertible, we can also define
\begin{align}
    Z_M=C_M^{-1}D_M=\frac{\id-\Delta_{M^{-1}}}{\id+\Delta_{M^{-1}}}\quad\!\!\text{with}\!\!\quad \Delta_M=-MJM^{-1}J\,.
\end{align}

Given $K^a{}_b$ with $[K,J]=0$, we further define
\begin{align}\label{eq:def-trbar-detbar}
    \overline{\Tr}(K)=\Tr(\overline{K})\quad\text{and}\quad\overline{\det}(K)=\det(\overline{K})\,,
\end{align}
where we decompose $K=K_1+\ii K_2$ as
\begin{align}
    K\equiv\left(\begin{array}{c|c}
    K_1     & K_2 \\
         \hline
    -K_2     & K_1
    \end{array}\right)
\end{align}
in an arbitrary basis, where $J$ takes the standard form~\eqref{eq:Kaehler-standard-forms}. While $K_1$ and $K_2$ depend on the specific basis,~\eqref{eq:def-trbar-detbar} is basis-independent~\cite{Hackl_2024}.

\subsubsection{Circle function and cocycles}

Result~1 of~\cite{Hackl_2024} showed that for a given $\mathcal{S}(M)$, we have
\begin{align}\label{eq:old-result}
    \braket{J|\mathcal{S}(M)|J}^2=\overline{\det}\left(\frac{M-JMJ}{2}\right)^{\iii^2}\,,
\end{align}
This can be rephrased as $\Phi_J[\mathcal{S}(M)]^2=\varphi(M)$, where we introduce the circle function $\varphi: \mathcal{G}\to\mathrm{U}(1)$ with
\begin{align}\label{eq:circle-function}
    \varphi(M)=\frac{\overline{\det}(C_M)}{|\overline{\det}(C_M)|}\,.
\end{align}

We can now compare the phase $\Phi_J(\mathcal{S}(M_1M_2))$ with $\Phi_J(\mathcal{S}(M_1))$ and $\Phi_J(\mathcal{S}(M_2))$ by first looking at their respective squares. This motivates the relation
\begin{align}
    e^{\ii \eta(M_1,M_2)}=\frac{\Phi^2_J(\mathcal{S}(M_1M_2))}{\Phi^2_J(\mathcal{S}(M_1))\Phi^2_J(\mathcal{S}(M_2))}=\frac{\varphi(M_1M_2)}{\varphi(M_1)\varphi(M_2)}\,,
\end{align}
which defines the cocycle function $\eta: \mathcal{G}\times\mathcal{G}\to \mathrm{U}(1)$ uniquely by requiring $\eta$ to be continuous and satisfy $\eta(\id,\id)=0$. It can be computed explicitly as
\begin{align}\label{eq:cocycle-function}
    \eta(M_1,M_2)&=\mathrm{Im}\,\overline{\mathrm{Tr}}\log\big(\id-Z_{M_1}Z_{M_2^{-1}}\big)\,.
\end{align}

\subsubsection{Parametrization}

The above facts imply that the squeezing operators actually form a representation of the double cover $\widetilde{\mathcal{G}}$ of the Lie group $\mathcal{G}$. This representation is either infinite-dimensional (bosons) or $2^N$-dimensional (fermions), so it is desirable to describe the multiplication law in $\widetilde{\mathcal{G}}$ using smaller matrices, such as the elements of $\mathcal{G}$.

The idea of~\cite{Hackl_2024} building on~\cite{rawnsley2012universal} is to construct the double cover as the subset
\begin{align}\label{group:DC}
    \widetilde{G}=\{(M,\psi)\mid\varphi(M)=\psi^2\}\subset\mathcal{G}\times\mathrm{U}(1)\,,
\end{align}
with the circle function $\varphi$ from~\eqref{eq:circle-function}. For each $M\in\mathcal{G}$, there exist thus two elements $(M,\pm\sqrt{\varphi(M)})$ in the double cover with the natural projection $\pi: \widetilde{G}\to\mathcal{G}$, such that $\pi(M,\psi)=M$.

It is shown in~\cite{Hackl_2024} that the group multiplication can be efficiently described as
\begin{align}\label{eq:group-mult}
    (M_1,\psi_1)\cdot(M_2,\psi_2)=(M_1\cdot M_2,\psi_1\psi_2e^{\frac{\ii}{2}\eta(M_1,M_2)})\,.
\end{align}

The operator $S(M,\psi)$ represents the uniquely defined group element $(M,\psi)$ in the double cover $\widetilde{\mathcal{G}}$, implementing the symplectic transformation $M$ while fixing its global phase relative to the reference state $\ket{J}$. The squeezing operator $\mathcal{S}(M,\psi)$ is uniquely characterized by requiring
\begin{subequations}\label{eq:def:sq}
\begin{align}
    \mathcal{S}^\dagger(M,\psi)\hat{\xi}^a\mathcal{S}(M,\psi)&=M^a{}_b\hat{\xi}^b\,,\label{eq:def:sq-psi-op}
    \\
    \Phi_J[\mathcal{S}(M,\psi)]&=\psi^*\,.\label{eq:def:psi}
\end{align}
\end{subequations}
As a proper representation, we consequently have
\begin{align}\label{eq:sq-gm}
    \mathcal{S}(M_1,\psi_1)\cdot\mathcal{S}(M_2,\psi_2)=\mathcal{S}(M_1\cdot M_2,\psi_1\psi_2e^{\frac{\ii}{2}\eta(M_1,M_2)})\,.
\end{align}

\subsubsection{Cartan decomposition}

Tracking the phase effect of composing displacements and squeezings is the prerequisite to deduct the complex phase and multiplication laws of an inhomogeneous Gaussian unitary. Given a Gaussian state $\ket{J}$, the Cartan decomposition allows us to decompose any group element as $M=Tu$, where only $u$ contributes to the phase. The components $T=(MM^{\mathsf T})^{1/2}$ is the squeezing and $u:=T^{-1}M$ is a passive rotation (orthogonal-symplectic).

For a Lie group element $M\in\mathcal{G}$ and $K\in\mathfrak{g}$, the Cartan decomposition $M=Tu$ is given by~\cite{Hackl_2021,Hackl_2024} 
\begin{align}
    T&=\sqrt{\Delta_M}\quad\text{and}\quad u=T^{-1}M\,,
\end{align}
where $JT=T^{-1}J$ and $Ju=uJ$. One can show~[CITE] that $\varphi(T)=1$, so that we can choose
\begin{align}
    \mathcal{S}(M,\psi)=\mathcal{S}(T,1)\mathcal{S}(u,\psi)\,,
\end{align}
where we used $\eta(\id,u)=0$. This shows
\begin{align}\label{eq:sq_phase}
    \Phi_J[\mathcal{S}(M,\psi)]=\Phi_J[\mathcal{S}(u,\psi)]\,.
\end{align}
In physical terms, $\mathcal{S}(u,\psi)$ is a passive transformation satisfying $\mathcal{S}(u,\psi)\ket{J}=\psi\ket{J}$, while $\mathcal{S}(T,u)$ is a pure squeezing transformation that does not change the phase.

\subsubsection{Gaussian unitary from PQH}

We have seen that $M\in\mathcal{G}$ does not suffice to determine the complex phase of a squeezing operator $\mathcal{S}(M)$, which is why introduced $\mathcal{S}(M,\psi)$. However, given a Lie algebra element $K$, we have the isomorphism~\eqref{eq:Khat} that uniquely specifies $\widehat{K}$ giving rise to a squeezing operator $\exp(\widehat{K})$. It is therefore natural to ask what is the complex phase $\psi=\Psi_J[e^{\widehat{K}}]$, such that $\mathcal{S}(e^K,\psi)=e^{\widehat{K}}$.

This question was answered for bosons and fermions in Result 3a of~\cite{Hackl_2024}, which we compactly review in appendix~\ref{ap:psi}. The strategy was to decompose $\widehat{K}=\widehat{K}_I+\widehat{K}'$, such that $K_I$ has purely imaginary eigenvalues and a commuting remainder $K'$. By choosing an adapted reference complex structure $\tilde{J}$, the argument (phase) of the expectation value $\braket{J|\exp{\widehat{K}}|J}$ is expressed as a sum of three contributions: (i) a simple trace term $\Tr(K_I\tilde{J})$ determined by the eigenvalues of $K_I$, (ii) a cocycle term $\eta$ that encodes the change of basis relating $J$ to $\tilde{J}$, and (iii) a determinant term involving the remainder $K'$. Thus, this closed expression isolates the otherwise ambiguous global phase in terms of classical symplectic data and the cocycle function.

\emph{Note that the bosonic case was also considered more recently in~\cite{Quesada2025}, where $\braket{J|e^{-\ii\hat{H}}|J}$ was referred to as vacuum-to-vacuum amplitude.}

\subsection{Displacements}

The class of Gaussian unitaries generated by linear observables is known as the displacement operators, denoted by $\mathcal{D}$. They form the displacement group, which we want to combine with the \emph{homogeneous} Gaussian unitaries discussed in section~\ref{sec:2:hom-gu}.

\subsubsection{Displacement operators}\label{sec:2:disp}

Detailed derivations of the subsection can be found in appendix~\ref{ap:B:lin-com}. Given a phase space vector $z^a\in V$ or a dual vector $w_a\in V^*$, we define
\begin{align}
    \widehat{z}=-z^a\lambda_{ab}\hat{\xi}^b\quad\text{and}\quad \widehat{w}=-\iii w_a\hat{\xi}^a\,.
\end{align}
We define the isomorphism between $V$ and $V^*$ given by $\iii \lambda_{ab}$ that takes a vector $z^a$ and maps it to the dual vector $\tilde{z}_a=\iii\lambda_{ab}z^b$ with inverse $\iii \Lambda^{ab}$ that maps a dual vector $w_a$ to $\tilde{w}^a=\iii\Lambda^{ab}w_b$. We then have $\widehat{z}=\widehat{w}$ for $w_a=\tilde{z}_a$ or $\tilde{w}^a=z^a$. Here, we list their commutators as
\begin{subequations}
\begin{align}
    [\widehat{z}_1,\widehat{z}_2]&=-z_1\lambda z_2=z_2\lambda z_1\,,\label{eq:com:z}
    \\
    [\widehat{w}_1,\widehat{w}_2]&=-w_1\Lambda w_2=w_2\Lambda w_1\,,
    \\
    [\widehat{Fz},\widehat{z}]&=[\widehat{w},\widehat{wF}]\,,\label{eq:fzfw}
\end{align}
\end{subequations}
for any linear map $F^a{}_b$. For the bosonic case, the line~\eqref{eq:com:z} gives a symplectic structure $\omega(\cdot,\cdot)$ as
\begin{align}\label{eq:symp_form}
    [\widehat{z_1},\widehat{z_2}]=\ii z_1\omega z_2=\ii\omega(z_1,z_2)\,.
\end{align}
We further find the relation
\begin{align}
    [\hat{\xi}^a,\widehat{z}]&=z^a\,,\label{eq:com-xi-z}
    \\
    e^{-\widehat{z}}\hat{\xi}^ae^{\widehat{z}}&=\hat{\xi}^a+z^a\,,\label{eq:expz_gen}
\end{align}
which justifies to define the displacement operator as
\begin{align}\label{eq:def:disp}
    \mathcal{D}(z)=e^{\widehat{z}}\quad\text{with}\quad\mathcal{D}^\dagger(z)\hat{\xi}^a\mathcal{D}(z)=\hat{\xi}^a+z^a\,,
\end{align}
satisfying the relation product relation
\begin{align}\label{eq:displacement-multiplication}
    \mathcal{D}(z_1)\mathcal{D}(z_2)&=e^{\frac{1}{2}[\widehat{z}_1,\widehat{z}_2]}\mathcal{D}(z_1+z_2)\,,
\end{align}
which we will be able to relate to the Heisenberg group.

\subsubsection{Displacement and Heisenberg group}

Using the multiplication law of displacement operators~\eqref{eq:displacement-multiplication}, we can define the displacement group
\begin{align}
    \mathcal{D}_N=\mathbb{R}^{2N}\times\mathrm{U}(1)\label{eq:group:H}
\end{align}
with the multiplication law
\begin{align}\label{eq:displacement-multiplication-law}
    \hspace{-2mm}(z_1,e^{\ii\vartheta_1})\cdot(z_2,e^{\ii\vartheta_2})=(z_1+z_2,e^{\ii(\vartheta_1+\vartheta_2+\frac{1}{2}\omega(z_1,z_2))}).
\end{align}
By construction the displacement operators $\mathcal{D}(z,e^{\ii\vartheta})=e^{\ii\vartheta}e^{\widehat{z}}$ with additional phase form a faithful representation of the displacement group.

This group is closely related to the real Heisenberg group $\mathbb{H}_N$, which is diffeomorphic to $\mathbb{R}^{2N+1}$ and satisfies the multiplication law
\begin{align}
    (z_1,\vartheta_1)\cdot(z_2,\vartheta_2)=(z_1+z_2,\vartheta_1+\vartheta_2+\tfrac{1}{2}\omega(z_1,z_2))\,.
\end{align}
$\mathbb{H}_N$ forms the universal cover of $\mathcal{D}_N$ via the mapping
\begin{align}\label{eq:covering-map}
    \mathbb{H}_N\ni(z,\vartheta)\mapsto(z,e^{\ii\vartheta})\in\mathcal{D}_N\,,
\end{align}
which implies that displacement operators can also be understood as (non-faithful) representation of the Heisenberg group. Here, $\mathrm{U}(1)$ describes the complex phase factor $e^{\frac{1}{2}[\widehat{z}_1,\widehat{z}_2]}$, thereby yielding a central extension from the Abelian translation group $\mathbb{R}^N$, which arises from the non-commutativity of quantum observables.

\section{Inhomogeneous Gaussian unitaries}\label{sec:results}

Our aim here is to write down the rule for how squeezing and displacement operations combine for bosons, including the correct overall phase. We do this by working in a lifted space where each operation carries a phase tag $\Psi$, so that composition is associative and matches operator multiplication. This lets us predict, \eg measurable phase effects in Gaussian circuits and do phase-correct gate compilation for bosonic codes.

From a more mathematical perspective, we want to extend the double cover group $\widetilde{\mathcal{G}}$ to its inhomogeneous version $\mathrm{I}\widetilde{\mathcal{G}}$. If we consider a squeezing operator followed by displacement operator while ignoring the sign ambiguity as $(M,z)$ schematically, \cite{Hackl_2021} also provides their multiplication rules
\begin{align}\label{eq:Mz-ml}
    (M_1,z_1)\cdot(M_2,z_2)&=(M_1\cdot M_2,z_1+M_1z_2)\,.
\end{align}
Hence our main goal is to solve the completed case where we consider $(M,z,\Psi)$ for some overall function $\Psi(M,z)$, and come up with their multiplication rules.

Our first result will be an explicit product law for the complex phase. First, we recall the BCH composition rules for displacement and squeezing operators. Next, we apply the Lie-algebra inverse to express any quadratic Hamiltonian as a unitary with an added phase $\theta$. Finally, we compute the extra phase $\zeta(M_1,M_2,z_1,z_2)$ that appears when multiplying two Gaussian unitaries. This pins down the inhomogeneous group for bosonic systems as the double cover
\begin{align}
    \mathrm{IMp}(2N,\mathbb{R})=(\mathrm{Sp}(2N,\mathbb{R})\ltimes\mathbb{R}^{2N})\cetimes\mathrm{U}(1)\,.
\end{align}
with a central $\mathrm{U}(1)$ phase.

\subsection{Multiplication law}

An inhomogeneous operation $\mathcal{U}$ is parametrized by $(M,z,\Psi)$ where $M^a{}_b$ is a squeezing matrix, $z^a$ a displacement vector, such that
\begin{align}\label{eq:rep-inh}
    \mathcal{U}^\dagger\hat{\xi}^a\mathcal{U}=M^a{}_b\hat{\xi}^b+z^a\quad\text{and}\quad\Phi_J[\mathcal{U}]=\Psi^*\,.
\end{align}
This only becomes a complete parametrization when one specifies a unique multiplication law.

Recall for a Lie group element $M\in\mathcal{G}$, we can always decompose $M=Tu$ with $T=\sqrt{\Delta_M}$. Then the Cartan decomposition gives
\begin{align}
    \mathcal{S}(M,\psi)=\mathcal{S}(T,1)\mathcal{S}(u,\psi)\,,
\end{align}
with
\begin{align}
    \Phi_J[\mathcal{S}(M,\psi)]=\Phi_J[\mathcal{S}(u,\psi)]=\psi^*\,.
\end{align}

\begin{lemma}\label{lemma:gamma}
\textbf{Expectation value of $\mathcal{D}(z)\mathcal{S}(T,1)$ and its phase $\gamma$.} 
The other decomposed squeezing operator would merge with the displacement operator and have the following expectation value
\begin{align}\begin{split}\label{eq:fullvalue}
    &\braket{J|\mathcal{D}(z)\mathcal{S}(T,1)|J}\\&=\det(\id-Y_M^2)^{1/8}e^{-\frac{1}{4}zg(\id+Y_M)z}e^{\frac{\ii}{4}z\omega Y_Mz}\,,   
\end{split}\end{align}
where
\begin{align}
    Y_M&=\frac{\id-\Delta_M}{\id+\Delta_M}=\frac{2}{\id+\Delta_M}-\id=Z_{M^{-1}}\,,
    \\
    \Delta_{M}&=-MJM^{-1}J=MM^\tp\,.
\end{align}
For convenience, we also denote the phase as
\begin{align}\label{eq:phase_gamma}
    \Phi_J[\mathcal{D}(z)\mathcal{S}(T,1)]&=e^{\ii\gamma(M,z)}\,,
\end{align}
where the displaced-squeezing phase is
\begin{align}\label{eq:gamma}
    \gamma(M,z)&=\frac{\ii}{4}[\widehat{Y_Mz},\widehat{z}]=\frac{1}{4}z\omega Y_Mz=\frac{1}{4}\omega(z,Y_Mz)\,,
\end{align}
and its modulus as
\begin{align}\label{eq:evmodulus}
    |\braket{J|\mathcal{D}(z)\mathcal{S}(T,1)|J}|=|\braket{e^{\widehat{K}_+}}|e^{-\frac{1}{2}zg(\id+MM^\tp)^{-1}z}\,,
\end{align}
where $\braket{e^{\widehat{K}_+}}$ is formulated in~\eqref{eq:<eK+>}.
\end{lemma}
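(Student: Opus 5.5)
The plan is to reduce everything to Gaussian algebra with respect to the complex structure $J$: split linear and quadratic operators into creation and annihilation parts, normal-order, and let the annihilation parts act on $\ket{J}$.

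\textbf{Step 1 (generator of $\mathcal{S}(T,1)$).} Since $T=\sqrt{\Delta_M}$ is positive and symplectic with $JT=T^{-1}J$, the generator $K_+=\log T\in\mathfrak{g}$ anticommutes with $J$. Because $\varphi(T)=1$, we take $\mathcal{S}(T,1)=e^{\widehat{K}_+}$.

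\textbf{Step 2 (disentangle the squeezer).} Using the projectors $\tfrac12(\id\mp\ii J)$, split $\widehat{K}_+$ into a purely creation quadratic $\widehat{A}_+$ and a purely annihilation quadratic $\widehat{A}_-$. A $J$-commuting part does not appear, since $K_+$ is antilinear. The standard $\mathfrak{su}(1,1)$-type disentangling formula then gives
\begin{align*}
e^{\widehat{K}_+}=e^{\widehat{B}_+}\,e^{\widehat{C}}\,e^{\widehat{B}_-},
\end{align*}
where $\widehat{B}_\pm$ are pure creation and annihilation quadratics and $\widehat{C}$ is $J$-linear. One can check this mode by mode in a basis where $J$ and $T$ are simultaneously diagonal. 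In that basis $T=e^{r}$ per mode and $Y_M=-\tanh r$, so $Y_M$ is exactly the squeezing coefficient in $\widehat{B}_+$. Acting on $\ket{J}$, the factor $e^{\widehat{B}_-}$ is trivial and $e^{\widehat{C}}$ gives a positive scalar. Hence
\begin{align*}
\mathcal{S}(T,1)\ket{J}=\braket{e^{\widehat{K}_+}}\,e^{\widehat{B}_+}\ket{J},\qquad \braket{e^{\widehat{K}_+}}=\det(\id-Y_M^2)^{1/8}.
\end{align*}
The value of $\braket{e^{\widehat{K}_+}}$ is consistent with \eqref{eq:old-result}: per mode, $\overline{\det}(C_T)^{-1/2}=(\cosh r)^{-1/2}=(1-\tanh^2 r)^{1/4}$, which becomes the power $1/8$ once written as a real $2N\times2N$ determinant.

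\textbf{Step 3 (normal-order the displacement).} Split $\widehat{z}=\widehat{z}_++\widehat{z}_-$ into creation and annihilation parts. Then
\begin{align*}
\mathcal{D}(z)=e^{\widehat{z}_+}\,e^{\widehat{z}_-}\,e^{-\frac12[\widehat{z}_+,\widehat{z}_-]},
\end{align*}
and $\bra{J}e^{\widehat{z}_+}=\bra{J}$. The scalar $e^{-\frac12[\widehat{z}_+,\widehat{z}_-]}$ equals $e^{-\frac14 zgz}$ by \eqref{eq:com:z} together with the Kähler relation \eqref{eq:Kaehler-relation}.

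\textbf{Step 4 (commute the annihilation displacement through the squeezer).} Move $e^{\widehat{z}_-}$ through $e^{\widehat{B}_+}$. Since $[\widehat{z}_-,\widehat{B}_+]$ is a creation-linear operator and its commutator with $\widehat{z}_-$ is a c-number, BCH truncates at second order. Letting the resulting exponentials act on $\ket{J}$ and $\bra{J}$ leaves only the scalar
\begin{align*}
e^{\frac12[\widehat{z}_-,[\widehat{z}_-,\widehat{B}_+]]}.
\end{align*}
Through \eqref{eq:fzfw}, this scalar should equal $\exp\!\big(-\tfrac14 zgY_Mz+\tfrac{\ii}{4}z\omega Y_Mz\big)$. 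The real and imaginary parts appear because the complexified pairing $\tfrac12(g-\ii\omega)$ restricted to the $-\ii$ eigenspace of $J$ produces $g$ and $\omega$ together. Combining with Step 3 gives \eqref{eq:fullvalue}.

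\textbf{Step 5 (read off the phase and modulus).} The prefactor is real and positive, and $zgY_Mz$ is real, so the phase is $\gamma=\tfrac14\omega(z,Y_Mz)$. This equals $\tfrac{\ii}{4}[\widehat{Y_Mz},\widehat{z}]$ by \eqref{eq:symp_form}. For the modulus, use $\id+Y_M=2(\id+\Delta_M)^{-1}$ with $\Delta_M=MM^\tp$ in the standard basis to obtain \eqref{eq:evmodulus}.

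\textbf{Main obstacle.} I expect the hardest part to be Step 4: the signs and conjugations in identifying the double commutator with $-\tfrac14 z(g-\ii\omega)Y_Mz$, and in confirming that $Y_M$, rather than $-Y_M$ or $Z_M$, appears. To pin these down I would first verify everything for a single mode with $z=(q_0,p_0)$ and $T=\mathrm{diag}(e^r,e^{-r})$. There the overlap of a displaced vacuum with a squeezed vacuum is an explicit one-dimensional Gaussian integral in the position representation. The multi-mode statement then follows by $\mathrm{U}(N)$-covariance of both sides under passive $u$ commuting with $J$.
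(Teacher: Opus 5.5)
Your proposal is correct and follows essentially the paper's route (Appendix~C): the paper also disentangles $e^{\widehat{K}_+}=e^{\widehat{L}_+}(\cdots)e^{\widehat{L}_-}$ with $L=\tanh K_+=-Y_M$, normal-orders $e^{\widehat{z}}$ to obtain $e^{-\frac14 zgz}\bra{0}e^{\widehat{z}_-}$, and evaluates the leftover $\braket{0|e^{\widehat{z}_-}e^{\widehat{L}_+}|0}=e^{\frac14 z(g-\ii\omega)Lz}$, which is exactly your scalar $e^{\frac12[\widehat{z}_-,[\widehat{z}_-,\widehat{B}_+]]}$ with $\widehat{B}_+=\widehat{L}_+$. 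The only difference is that the paper imports these disentangling identities from Hackl et al.\ (2021), whereas you derive them and fix the signs by a single-mode check plus passive covariance and factorization over modes; that check does give $\widehat{B}_+=\tfrac12\tanh r\,\hat{a}^{\dagger 2}$, i.e.\ coefficient $-Y_M$ as in the paper, and it reproduces your Step~4 exponent exactly.
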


\begin{proof}
Basically, one can use relevant results from~\cite[section~III-A-3]{Hackl_2021} to make a further derivation to find the explicit expectation value $\braket{0|e^{\widehat{z}}e^{\widehat{K}_+}|0}$. In particular, formulas~\cite[Eq.~(155,156,159,162)]{Hackl_2021} provide the normal ordering of displacement, purely squeezing and their interplays. Please see appendix~\ref{ap:gamma} for details, where a unified derivation is also made for fermions.
\end{proof}

The next step is to appropriately decompose a general Gaussian unitary into displacements and squeezings.

\begin{lemma}\label{lemma:decomp}
\textbf{Decomposition of unitary.} 
The most general inhomogeneous Gaussian unitary $\mathcal{U}$ can be decomposed as
\begin{subequations}
\begin{align}
    \mathcal{U}(M,z,\Psi)&=e^{\ii\theta}\mathcal{D}(z)\mathcal{S}(M,\psi)\,,\label{eq:unitary-decomp}\\
    e^{\ii\theta}&=\Psi^*\psi e^{-\ii\gamma(M,z)}\,,\label{eq:eitheta}
\end{align}
\end{subequations}
where $\psi=\pm\sqrt{\varphi(M)}$. This decomposition is unique up to the choice of $\psi$, which can also be made unique by fixing a convention, such as $\psi=\sqrt{\varphi(M)}$, where the square-root is defined as the principal branch of the complex square root with $\sqrt{-1}=\ii$ rather than $-\ii$.
\end{lemma}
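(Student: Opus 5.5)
The argument has three parts: existence, a phase check, and uniqueness. Throughout, $\mathcal{U}$ is characterized by \eqref{eq:rep-inh}: it implements the affine map $\hat\xi^a\mapsto M^a{}_b\hat\xi^b+z^a$ and has reference phase $\Phi_J[\mathcal{U}]=\Psi^*$.

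\textbf{Existence.} Fix $\psi$ with $\psi^2=\varphi(M)$; by \eqref{group:DC} there are exactly two choices, $\psi=\pm\sqrt{\varphi(M)}$. Set $\mathcal{V}=\mathcal{D}(z)\mathcal{S}(M,\psi)$. Using \eqref{eq:def:sq-psi-op} and \eqref{eq:def:disp} we compute
\begin{align}
\mathcal{V}^\dagger\hat\xi^a\mathcal{V}=\mathcal{S}^\dagger(M,\psi)\,(\hat\xi^a+z^a)\,\mathcal{S}(M,\psi)=M^a{}_b\hat\xi^b+z^a\,.
\end{align}
So $\mathcal{V}$ and $\mathcal{U}$ implement the same affine map. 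Then $\mathcal{U}\mathcal{V}^\dagger$ commutes with every $\hat\xi^a$. By irreducibility of the CCR representation (Schur's lemma for the Weyl operators), $\mathcal{U}\mathcal{V}^\dagger$ is a scalar, and since it is unitary, $\mathcal{U}=e^{\ii\theta}\mathcal{V}$ for some real $\theta$.

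\textbf{Fixing the phase.} It remains to evaluate $\Phi_J[\mathcal{V}]$. I use the Cartan decomposition $M=Tu$ and the factorization $\mathcal{S}(M,\psi)=\mathcal{S}(T,1)\mathcal{S}(u,\psi)$. Since $u$ is passive, $\mathcal{S}(u,\psi)\ket{J}=\psi^*\ket{J}$; this is consistent with \eqref{eq:def:psi}, so I adopt the convention that makes $\Phi_J[\mathcal{S}(u,\psi)]=\psi^*$. Then
\begin{align}
\braket{J|\mathcal{V}|J}=\psi^*\braket{J|\mathcal{D}(z)\mathcal{S}(T,1)|J}\,.
\end{align}
Lemma~\ref{lemma:gamma} shows this expectation value is nonzero: $\det(\id-Y_M^2)>0$ because $Y_M$ has eigenvalues in $(-1,1)$. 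Its phase is $e^{\ii\gamma(M,z)}$, so $\Phi_J[\mathcal{V}]=\psi^*e^{\ii\gamma(M,z)}$.

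Because $\Phi_J$ is homogeneous under unimodular scalars, $\Psi^*=\Phi_J[\mathcal{U}]=e^{\ii\theta}\psi^*e^{\ii\gamma}$. Using $|\psi|=1$, this solves to $e^{\ii\theta}=\Psi^*\psi e^{-\ii\gamma(M,z)}$, which is \eqref{eq:eitheta}.

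\textbf{Uniqueness.} Suppose $e^{\ii\theta}\mathcal{D}(z)\mathcal{S}(M,\psi)=e^{\ii\theta'}\mathcal{D}(z')\mathcal{S}(M',\psi')$. Conjugating $\hat\xi^a$ by both sides gives $M=M'$ and $z=z'$. Cancelling $\mathcal{D}(z)$ leaves $\mathcal{S}(M,\psi')=e^{\ii(\theta-\theta')}\mathcal{S}(M,\psi)$. Both operators lie in the double-cover representation, where the only operators implementing $M$ are $\pm\mathcal{S}(M,\psi)$, so $\psi'=\pm\psi$. The corresponding $\theta$ then follows from \eqref{eq:eitheta}, which means the only freedom is the sign of $\psi$. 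Choosing the principal branch $\psi=\sqrt{\varphi(M)}$ removes it.

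The main obstacle is the phase step. The sign conventions in \eqref{eq:def:psi} (whether $\Phi_J[\mathcal{S}(M,\psi)]$ equals $\psi^*$ or $\psi$) have to be tracked consistently, together with the passive eigenvalue relation, so that the final formula comes out as $\Psi^*\psi e^{-\ii\gamma}$ rather than its conjugate. I would check this against the trivial case $z=0$, $M=u$, where $\theta$ must vanish when $\Psi=\psi$.
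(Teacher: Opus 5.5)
Your proposal is correct and follows essentially the same route as the paper: you check the affine action of $\mathcal{D}(z)\mathcal{S}(M,\psi)$, use irreducibility to get $\mathcal{U}=e^{\ii\theta}\mathcal{D}(z)\mathcal{S}(M,\psi)$ (the paper cites an external proposition to the same effect where you invoke Schur's lemma), and fix $e^{\ii\theta}$ by evaluating $\Phi_J$ via $\mathcal{S}(M,\psi)=\mathcal{S}(T,1)\mathcal{S}(u,\psi)$ and Lemma~\ref{lemma:gamma}. Your choice $\mathcal{S}(u,\psi)\ket{J}=\psi^*\ket{J}$ is the one consistent with \eqref{eq:def:psi}; the paper's text writes $\psi\ket{J}$, but its computation in \eqref{eq:argpsi} uses $\psi^*$, and your explicit check that $\braket{J|\mathcal{D}(z)\mathcal{S}(T,1)|J}\neq0$ fills a step the paper leaves implicit.
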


\begin{proof}
First, let us prove that $(M,z,\Psi)$ together with the relations~\eqref{eq:rep-inh} characterizes the unitary operator $\mathcal{U}(M,z,\Psi)$ uniquely. It follows from the Proposition 6 from~\cite{hackl2021bosonic}, that any complex phase can be absorbed into the phase $e^{\ii\theta}$, while $\theta$ is only unique up to a $2\pi$ periodicity. \\
Second, we plug the decomposition~\eqref{eq:unitary-decomp} into the first equation~\eqref{eq:rep-inh} use the properties of the displacement operator~\eqref{eq:def:disp} and the squeezing operator~\eqref{eq:def:sq-psi-op}, which we find to be saturated, as shown in~\eqref{eq:proof:uxiu}. \\
Finally, we put the decomposition~\eqref{eq:unitary-decomp} into the second equation of~\eqref{eq:rep-inh} yielding
\begin{align}\begin{split}\label{eq:argpsi}
    \Psi^*&=\Phi_J[e^{\ii\theta}\mathcal{D}(z)\mathcal{S}(M,\psi)]\\
    &=e^{\ii\theta}\Phi_J[\mathcal{D}(z)\mathcal{S}(T,1)\mathcal{S}(u,\psi)]\\
    &=e^{\ii\theta}\Phi_J[\mathcal{D}(z)\mathcal{S}(T,1)]\psi^*\\
    &=\psi^*e^{\ii(\theta+\gamma(M,z))}\,,
\end{split}\end{align}
where we use~\eqref{eq:def:psi}, the Cartan decomposition~\eqref{eq:sq_phase} as $\mathcal{S}(u,\psi)\ket{J}=\psi\ket{J}$, and Lemma~\ref{lemma:gamma}. The last line directly implies~\eqref{eq:eitheta}.\\
Note that the choice of $\psi=\pm\sqrt{\varphi(M)}$ does not affect this proof, but will flip the sign of $e^{\ii\theta}$ accordingly.
\end{proof}

Before we come up with the multiplication rules, we need to appropriately commute the displacement and squeezing operators.

\begin{lemma}\label{lemma:commute}
\textbf{Commutation law.} 
Given a squeezing operator $\mathcal{S}(M,\psi)$ and a displacement operator $\mathcal{D}(z)$, they commute in the way of a semi-direct product
    \begin{align}
    \mathcal{S}(M,\psi)\mathcal{D}(z)=\mathcal{D}(Mz)\mathcal{S}(M,\psi)\,.
\end{align}
\end{lemma}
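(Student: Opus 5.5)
We prove the identity by conjugating the displacement by the squeezing operator and checking that this conjugation produces $\mathcal{D}(Mz)$ exactly, with no extra phase. Since $\mathcal{D}(z)=e^{\widehat{z}}$ and $\mathcal{S}(M,\psi)$ is unitary,
\begin{align}
    \mathcal{S}(M,\psi)\mathcal{D}(z)\mathcal{S}^\dagger(M,\psi)=\exp\!\big(\mathcal{S}(M,\psi)\,\widehat{z}\,\mathcal{S}^\dagger(M,\psi)\big)\,.
\end{align}
Because the exponential is conjugated directly, there is no BCH correction. The statement therefore reduces to the operator identity $\mathcal{S}(M,\psi)\,\widehat{z}\,\mathcal{S}^\dagger(M,\psi)=\widehat{Mz}$. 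Right-multiplying by $\mathcal{S}(M,\psi)$ then gives the claim. The choice of sign $\psi$ plays no role, because the phase cancels between $\mathcal{S}$ and $\mathcal{S}^\dagger$.

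The main step is to compute $\mathcal{S}\,\hat{\xi}^b\,\mathcal{S}^\dagger$. The defining relation~\eqref{eq:def:sq-psi-op} gives $\mathcal{S}^\dagger\hat{\xi}^a\mathcal{S}=M^a{}_b\hat{\xi}^b$. Inverting it, we get $\mathcal{S}\,\hat{\xi}^b\,\mathcal{S}^\dagger=(M^{-1})^b{}_c\hat{\xi}^c$; this is valid because the map $\hat{\xi}\mapsto M\hat{\xi}$ composes as a representation. By linearity,
\begin{align}
    \mathcal{S}\,\widehat{z}\,\mathcal{S}^\dagger=-z^a\lambda_{ab}(M^{-1})^b{}_c\hat{\xi}^c\,.
\end{align}

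We then rewrite $\lambda M^{-1}$ using the group property $M\Lambda M^\tp=\Lambda$. Inverting this gives $M^{-\tp}\lambda M^{-1}=\lambda$, hence $\lambda M^{-1}=M^\tp\lambda$. This yields $-z^a(M^\tp)_a{}^{d}\lambda_{dc}\hat{\xi}^c=-(Mz)^d\lambda_{dc}\hat{\xi}^c=\widehat{Mz}$, which is the required identity.

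The only delicate point is bookkeeping: getting the direction of conjugation right (whether it produces $M$ or $M^{-1}$) and the index placement in $\lambda M^{-1}=M^\tp\lambda$. As a consistency check, we can verify the result directly against~\eqref{eq:expz_gen} by acting on $\hat{\xi}^a$. Writing $\mathcal{D}(z)=\mathcal{S}^\dagger\mathcal{D}(Mz)\mathcal{S}$, we compute $\mathcal{D}^\dagger(z)\hat{\xi}^a\mathcal{D}(z)$. Conjugating $\hat{\xi}^a$ in turn by $\mathcal{S}^\dagger$, then $\mathcal{D}(Mz)$, then $\mathcal{S}$ gives
\begin{align}
    \hat{\xi}^a\;\to\;(M^{-1}\hat{\xi})^a\;\to\;(M^{-1}(\hat{\xi}+Mz))^a\;\to\;\hat{\xi}^a+z^a\,,
\end{align}
as expected. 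This check only confirms the identity up to a phase. The exact equality, with no phase, follows from the exponential argument above.
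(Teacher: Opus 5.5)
Your proposal is correct and takes essentially the same route as the paper. Both conjugate $\mathcal{D}(z)=e^{\widehat{z}}$ by the squeezing operator, use the defining relation~\eqref{eq:def:sq-psi-op} to move $M^{-1}$ onto $\hat{\xi}$ inside the exponent, and then use $\lambda M^{-1}=M^{\intercal}\lambda$ to recognize $\widehat{Mz}$. The only cosmetic difference is that the paper first rewrites $\mathcal{S}(M,\psi)=\mathcal{S}^\dagger(M^{-1},\psi^*)$ and applies the defining relation for $M^{-1}$, whereas you invert the relation for $M$ directly, which avoids needing the inverse element in the double cover.
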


\begin{proof}
    From the definition~\eqref{eq:def:sq-psi-op}, we have for both bosons and fermions
    \begin{align}\begin{split}
    &\mathcal{S}(M,\psi)\mathcal{D}(z)\mathcal{S}^\dagger(M,\psi)\\
    &=\mathcal{S}^\dagger(M^{-1},\psi^*)e^{-z^a\lambda_{ab}\hat{\xi}^b}\mathcal{S}(M^{-1},\psi^*)\\
    &=e^{-z^a\lambda_{ab}(M^{-1})^b{}_c\hat{\xi}^c}\\
    &=e^{-(Mz)^b\lambda_{bc}\hat{\xi}^c}\\
    &=\mathcal{D}(Mz)\,.
    \end{split}\end{align}
    Adding after both side a squeezing $\mathcal{S}(M,\psi)$ gives the desired commutation rule for the displacement and squeezing.
\end{proof}

Our next goal is to figure out the actual multiplication rules for the two unitaries.

\begin{result1}
Inhomogeneous Gaussian unitaries from a representation $\mathcal{U}:\mathrm{I}\widetilde{\mathcal{G}}\to\mathrm{Lin}(\mathcal{H})$ that characterized by the conditions
\begin{subequations}\label{eq:result1}
\begin{align}
    \mathcal{U}^\dagger(M,z,\Psi)\hat{\xi}^a\mathcal{U}(M,z,\Psi)&=M^a{}_b\hat{\xi}^b+z^a\,, \label{eq:result1_a}
    \\
    \Phi_J[\mathcal{U}(M,z,\Psi)]&=\Psi^*\,,\label{eq:result1_b}
\end{align}
\end{subequations}
where $M^a{}_b\in\mathcal{G}$ is a matrix and $z^a\in V$ is a vector in the phase space. They forms a unitary representation satisfying the multiplication rule
\begin{align}\begin{split}\label{eq:multiplication-law}
    &\mathcal{U}(M_1,z_1,\Psi_1)\cdot\mathcal{U}(M_2,z_2,\Psi_2)
    \\&=\mathcal{U}(M_1M_2,z_1+M_1z_2,\Psi_1\Psi_2e^{\ii\zeta(M_1,M_2,z_1,z_2)})\,,
\end{split}\end{align}
where we call
\begin{align}\begin{split}\label{eq:def:zeta}
    \zeta(M_1,M_2,z_1,z_2)={}&\tfrac{1}{2}\mathrm{Im}\,\overline{\mathrm{Tr}}\log\big(\id-Y_{M_1^{-1}}Y_{M_2}\big)\\
    &+\gamma(M_1,z_1)+\gamma(M_2,z_2)\\
    &-\gamma(M_1M_2,z_1+M_1z_2)\\
    &-\tfrac{1}{2}\omega(z_1,M_1z_2)\,,
\end{split}\end{align}
the inhomogeneous cocycle function and $\gamma$ was defined in~\eqref{eq:gamma}.
\end{result1}

\begin{proof}
We plug the decomposition from Lemma~\ref{lemma:decomp} for both unitaries into the LHS of~\eqref{eq:multiplication-law} and compute
\begin{align}\begin{split}\label{eq:dec-mult}
    &\mathcal{U}(M_1,z_1,\Psi_1)\cdot\mathcal{U}(M_2,z_2,\Psi_2)\\
    &=e^{\ii(\theta_1+\theta_2)}\mathcal{D}(z_1)\mathcal{D}(M_1z_2)\mathcal{S}(M_1,\psi_1)\mathcal{S}(M_2,\psi_2)\\
    &=e^{\ii(\theta_1+\theta_2)}e^{\frac{1}{2}[\widehat{z_1},\widehat{M_1z_2}]}\mathcal{D}(z_1+M_1z_2)\\
    &\qquad\times\mathcal{S}(M_1M_2,\psi_1\psi_2e^{\frac{\ii}{2}\eta(M_1,M_2)})\\
    &=e^{\ii \theta}\mathcal{D}(z)\mathcal{S}(M,\psi)\,,
\end{split}\end{align}
where we recognized in the final expression again the decomposition from Lemma~\ref{lemma:decomp} with
\begin{subequations}
\begin{align}
    M&=M_1M_2\,,\\
    z&=z_1+M_1z_2\,,\\
    \psi&=\psi_1\psi_2e^{\frac{\ii}{2}\eta(M_1,M_2)}\,,\\
    \theta&=\theta_1+\theta_2+\tfrac{1}{2}\omega(z_1,M_1z_2)\,,\label{eq:r1:eitheta}
\end{align}
\end{subequations}
where we use the symplectic form in~\eqref{eq:r1:eitheta} as from~\eqref{eq:symp_form}. The only remaining piece from this is to compute $\Psi$ via
\begin{align}
    \Psi^*&=\Phi_J[e^{\ii \theta}\mathcal{D}(z)\mathcal{S}(M,\psi)]\,,
\end{align}
which we already did in Lemma~\ref{lemma:decomp} to find
\begin{align}
\begin{split}
    \Psi&=\psi e^{-\ii(\theta+\gamma(M,z))}\\
    &=\Psi_1e^{\ii(\theta_1+\gamma(M_1,z_1))}\Psi_2e^{\ii(\theta_2+\gamma(M_2,z_2))}e^{\frac{\ii}{2}\eta(M_1,M_2)}
    \\&\qquad\times e^{-\ii(\theta_1+\theta_2+\frac{1}{2}\omega(z_1,M_1z_2)+\gamma(M_1M_2,z_1+M_1z_2))}\\
    &=\Psi_1\Psi_2e^{\ii(\gamma(M_1,z_1)+\gamma(M_2,z_2)+\frac{1}{2}\eta(M_1,M_2))}
    \\&\qquad\quad\times e^{-\ii(\frac{1}{2}\omega(z_1,M_1z_2)+\gamma(M_1M_2,z_1+M_1z_2))}\\
    &=\Psi_1\Psi_2e^{\ii\zeta(M_1,M_2,z_1,z_2)}\,,
\end{split}
\end{align}
where we use the fact from~\eqref{eq:argpsi} that $\psi=\Psi e^{\ii(\theta+\gamma(M,z))}$. The last line complete the proof for~\eqref{eq:def:zeta}.
\end{proof}

To summarize the idea, each component of $\zeta$ arises from decomposing and combining parts of displacement and squeezing operators. We mark the generated processes for each terms in their braces, where the arrow $\to$ represents the rewriting
\begin{align}\begin{split}
    \overbrace{\zeta(M_1,M_2,z_1,z_2)}^{\mathcal{U}_1\mathcal{U}_2\to\mathcal{U}_3}&=
    \overbrace{\gamma(M_1,z_1)}^{\mathcal{U}_1\to\mathcal{D}_1\mathcal{S}_1}+\overbrace{\gamma(M_2,z_2)}^{\mathcal{U}_2\to\mathcal{D}_2\mathcal{S}_2}
    \\
    &\quad\underbrace{-\tfrac{1}{2}\omega(z_1,M_1z_2)}_{\mathcal{D}_1\mathcal{D}_2'\to\mathcal{D}_3}+\underbrace{\tfrac{1}{2}\eta(M_1,M_2)}_{\mathcal{S}_1\mathcal{S}_2\to\mathcal{S}_3}
    \\
    &\quad\underbrace{-\gamma(M_1M_2,z_1+M_1z_2)}_{\mathcal{D}_3\mathcal{S}_3\to\mathcal{U}_3}\,,
\end{split}\end{align}
where the dashed displacement $\mathcal{D}'$ is a result of commutation $\mathcal{S}_1\mathcal{D}_2\to\mathcal{D}_2'\mathcal{S}_1$, which is also the source for the semi-direct product of the group. 

One another intention is to find the set for the inhomogeneous group $\mathrm{I}\widetilde{\mathcal{G}}$. Similar to the Poincaré group $\mathrm{SO}(1,3)\ltimes\mathbb{R}^{1,3}$, we would expect the resulting group to be a semi-direct product between metapletic group and the Heisenberg group. The group structure can be deducted from (\ref{eq:group:H}) and (\ref{group:DC}), gives
\begin{align}\begin{split}
    \mathrm{IMp}(2N,\mathbb{R})&=\mathrm{Sp}(2N,\mathbb{R})\ltimes\mathbb{D}_N
    \\&\cong\mathrm{ISp}(2N,\mathbb{R})\cetimes\mathrm{U}(1)
    \\&=(\mathrm{Sp}(2N,\mathbb{R})\ltimes\mathbb{R}^{2N})\cetimes\mathrm{U}(1)
    \\&=\{(M,z,\Psi)\}\,.
\end{split}\end{align}
The squeezing phases $\pm\psi\in\mathrm{U}(1)$ is now combined with the displacement phase group $\mathrm{U}(1)$ into a new overall phase group $\mathrm{U}(1)$ with elements $\Psi$. Notice that $\cetimes$ is referred as a twisted product, meaning the parametrization of inhomogeneous metaplectic group is a central $\mathrm{U}(1)$-extension of the inhomogeneous symplectic group $\mathrm{Sp}(2N,\mathbb{R})\ltimes\mathbb{R}^{2N}$.

\subsection{Gaussian unitary from GQH}

In some cases, a general quadratic Hamiltonian $\hat{H}$ is given, which will generate an inhomogeneous Gaussian unitary transformation $\mathcal{U}=e^{-\ii\hat{H}}$. One would than like to know how to deduct the unitary entries $(M,z,\Psi)$ from the known Hamiltonian parameters $(h,f,c)$. This primarily deals with determining $\Phi_J(e^{-\ii\hat{H}})$.

This task require an explicit relation between linear and quadratic operators. Consider the famous Baker-Campbell-Hausdorff (BCH) relation between the exponentials of $\widehat{w}=-\iii w_a\hat{\xi}^a$ and $\widehat{K}=\frac{1}{2}\lambda_{ac}K^c{}_b\hat{\xi}^a\hat{\xi}^b$. Note that we will allow for $K\in\mathfrak{g}_{\mathbb{C}}$ and $w\in V^*_{\mathbb{C}}$, \ie the resulting operators may not be anti-Hermitian, such that their exponentials may not be unitary.

\begin{lemma}\label{lemma:BCH:exp(w)exp(K)}
The BCH relation gives the following operator expression
\begin{subequations}
\begin{align}
    e^{\widehat{K}}e^{\widehat{w}}&=\exp\big(\widehat{K}+\widehat{w\alpha(K)}+[\widehat{w},\widehat{w\beta(K)}]\big)\,,\label{eq:BCH-lin-qu}
    \\
    e^{\widehat{w}}e^{\widehat{K}}&=\exp\big(\widehat{K}+\widehat{w\alpha(-K)}+[\widehat{w},\widehat{w\beta(K)}]\big)\,,\label{eq:BCH-qu-lin}
\end{align}
\end{subequations}
where the following objects are introduced
\begin{subequations}
\begin{align}
    \alpha(K)&=\frac{K}{e^{K}-\id}\,,\label{eq:ab}\\
    \beta(K)&=\frac{1}{4}\frac{K-\sinh{K}}{\id-\cosh{K}}\,.\label{eq:beta}
\end{align}
\end{subequations}
The inverse is
\begin{subequations}
\begin{align}
    w_a&=f_b\alpha^{-1}(-K)^b{}_a\,,
    \\
    e^{\widehat{K}+\widehat{f}}&=e^{-[\widehat{w},\widehat{w\beta(K)}]}e^{\widehat{w}}e^{\widehat{K}}\,.\label{eq:inverse}
\end{align}
\end{subequations}
\end{lemma}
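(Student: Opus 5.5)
The plan is to exploit that $\widehat{K}$, the linear operators $\widehat{v}$ and the identity span a finite-dimensional Lie algebra in which the linear part is a two-step nilpotent ideal. This lets us solve for the exponent of a product by an ordinary differential equation instead of summing the full BCH series.

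First I will establish the relevant brackets from \eqref{eq:com-og} and the unified notation of table~\ref{tab:structures}. These are $[\widehat{K},\widehat{v}]=\widehat{vK}$, with the sign and ordering to be fixed by direct computation, and $[\widehat{v}_1,\widehat{v}_2]=-v_1\Lambda v_2\,\id$ as in the commutators listed in section~\ref{sec:2:disp}. Together with \eqref{eq:fzfw}, which gives $[\widehat{w},\widehat{wF}]$ for any $F$, these imply that $\mathrm{span}\{\widehat{v},\id\}$ is an ideal. On it $\mathrm{ad}_{\widehat{K}}$ acts as right multiplication by $K$, and any $\mathrm{ad}_{\widehat{u}}$ maps it into the centre.

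To prove \eqref{eq:BCH-lin-qu}, I will set $F(t)=e^{\widehat{K}}e^{t\widehat{w}}$ and make the ansatz $F(t)=e^{X(t)}$ with $X(t)=\widehat{K}+\widehat{u(t)}+c(t)\id$, where $u(0)=0$ and $c(0)=0$. Since $F'=F\,\widehat{w}$, the derivative-of-the-exponential formula turns this into the condition
\[
\frac{1-e^{-\mathrm{ad}_{X}}}{\mathrm{ad}_{X}}\big(\widehat{u'}+c'\id\big)=\widehat{w}.
\]
I will expand $\mathrm{ad}_X^n\widehat{v}$ as $\widehat{vK^n}$ plus a scalar of the form $[\widehat{u},\widehat{vK^{n-1}}]$ or similar. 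The linear part then yields $u'\,\tfrac{1-e^{-K}}{K}=w$, so $u(t)=t\,w\,\tfrac{K}{1-e^{-K}}$, whose sign convention has to be matched to $\alpha(K)=K/(e^K-\id)$ after fixing the sign in the first bracket. The scalar part gives an equation for $c'$ that is linear in $t$. Integrating it to $t=1$, I expect $c(1)=[\widehat{w},\widehat{w\beta(K)}]$.

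The main obstacle is this scalar bookkeeping. One must resum $\sum_n$ of the central terms into a closed function of $K$ and recognise that, after using \eqref{eq:fzfw} to symmetrise $[\widehat{wA},\widehat{wB}]$ into $[\widehat{w},\widehat{wA^{\tp}\!B}]$-type expressions, the result collapses to $\tfrac14(K-\sinh K)/(\id-\cosh K)$. I would first check the identity as a formal power series to low orders, and then verify it in the eigenbasis of $K$ where everything reduces to scalar functions. Throughout, $K$ is assumed to have no eigenvalues in $2\pi\ii\mathbb{Z}\setminus\{0\}$, so that $\alpha$ and $\beta$ are analytic.

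For \eqref{eq:BCH-qu-lin} I will write $e^{\widehat{w}}e^{\widehat{K}}=(e^{-\widehat{K}}e^{-\widehat{w}})^{-1}$ and apply \eqref{eq:BCH-lin-qu} with $(K,w)\to(-K,-w)$. This uses that $\beta$ is odd, since its numerator is odd and its denominator is even, so $-\beta(-K)=\beta(K)$. Finally, the inverse relation \eqref{eq:inverse} follows by setting $\widehat{f}=\widehat{w\alpha(-K)}$, i.e.\ $w=f\alpha^{-1}(-K)$. Because $[\widehat{w},\widehat{w\beta(K)}]$ is central, it can be split off as a scalar exponential $e^{-[\widehat{w},\widehat{w\beta(K)}]}$, which gives \eqref{eq:inverse}.
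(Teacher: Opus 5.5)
Your proposal is correct, and its core matches the paper's: the derivative-of-the-exponential formula applied to $e^{\widehat{K}}e^{t\widehat{w}}$, with an exponent made of $\widehat{K}$, a linear term and a central term. The differences lie in the surrounding steps.

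The paper first uses Dynkin's formula to fix the shape $\widehat{K}+\widehat{w\alpha(K)}+[\widehat{w},\widehat{w\beta(K)}]$. It then determines $\alpha$ separately by conjugating $\hat{\xi}^a$, and uses the ODE with ansatz $e^{\widehat{K}+t\widehat{v}+t^2f}$ only to extract $\beta$. You let a single ODE with free $u(t),c(t)$ produce both. You also avoid Dynkin's formula entirely, since $\frac{d}{dt}\big(e^{X(t)}e^{-t\widehat{w}}\big)=0$ once $X$ solves your equation, which identifies $e^{X(t)}$ with $e^{\widehat{K}}e^{t\widehat{w}}$.

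For the reversed ordering, the paper writes $e^{\widehat{w}}e^{\widehat{K}}=e^{\widehat{K}}e^{\widehat{we^K}}$ and uses $e^K\alpha(K)=\alpha(-K)$ together with $[\widehat{w_1M},\widehat{w_2M}]=[\widehat{w}_1,\widehat{w}_2]$. Your inversion of $e^{-\widehat{K}}e^{-\widehat{w}}$ combined with $\beta(-K)=-\beta(K)$ is shorter and equally valid.

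Two details need settling when you write it out.

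First, the paper's bracket is $[\widehat{w},\widehat{K}]=\widehat{wK}$, so $\mathrm{ad}_{\widehat{K}}$ acts as right multiplication by $-K$. The linear equation then reads $u'\frac{e^K-\id}{K}=w$, and $u=t\,w\alpha(K)$ follows directly with no convention to reconcile.

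Second, the raw resummation of the central terms gives $c(1)=\tfrac14\,w\frac{e^K-\id-K}{\cosh K-\id}\Lambda w$. This differs from $w\beta(K)\Lambda w$ by $\tfrac14\,w\Lambda w$. That term vanishes only because $wF(K)\Lambda w=0$ for every even $F$ when $\Lambda$ is antisymmetric, i.e.\ in the bosonic case. This is exactly the appendix lemma stating $[\widehat{w},[\widehat{w},[\widehat{K}]^n]]=0$ for even $n$. A check of scalar functions in an eigenbasis would therefore show a constant mismatch unless your symmetrisation step explicitly discards the even part.

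Finally, your exclusion of eigenvalues in $2\pi\ii\mathbb{Z}\setminus\{0\}$ is a genuine hypothesis that the paper leaves implicit.
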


\begin{proof}
Their commutators are given by
\begin{subequations}
\begin{align}
    [\widehat{w},\widehat{K}]&=\widehat{wK}\,,
    \\
    [\widehat{w},[\widehat{K}]^n]&=\widehat{wK^n}\,.
\end{align}
\end{subequations}
More commutators as discussed in appendix~\ref{ap:B:nested} will also be applied. The key method involves solving the differential equations of Lie algebras
\begin{align}
    e^{-X(t)}\frac{\mathrm{d}}{\mathrm{d}t}(e^{X(t)})=\sum_{n=0}^\infty\frac{[\dot{X}(t),[X(t)]^n]}{(n+1)!}\,,
\end{align}
where $X(t)$ is a smooth, matrix-valued function with derivative $\dot{X}(t)=\frac{\mathrm{d}X(t)}{\mathrm{d}t}$. Then comparing with the series generated by nested commutators between $\hat{w}$ and $\hat{K}$, with each quadratic, linear and scalar term, would archive each desired formula. Please check appendix~\ref{ap:ab} for the full proof in details.

For the inverse, it is obviously the quadratic terms match as $\widehat{K}$. The linear term is
\begin{align}
    \widehat{w\alpha(-K)}=-\ii f_a\hat{\xi}^a=\widehat{f}\,,
\end{align}
gives $w\alpha(-K)=f$ and thus
\begin{align}
    w_a=f_b\alpha^{-1}(-K)^b{}_a=f_b\left(\frac{\id-e^{-K}}{K}\right)^b_{\hspace{2mm}a}\,.
\end{align}
The linear term just follows from the commutator.
\end{proof}

We would now consider the bosonic case only, to avoid the subtle Grasmann property for fermionic systems.

\begin{result2}
Given a bosonic GQH $\hat{H}$ with parameters $(h,f,c)$ as in~\eqref{eq:GQH}, we have $e^{-\ii\hat{H}}=\mathcal{U}(M,z,\Psi)$ given by
\begin{align}
    M^a{}_b&=(e^K)^a{}_b\quad\text{with}\quad K^a{}_b=\Omega^{ac}h_{cb}\,,\\
    z^a(h,f)&=f_c(e^{-\Omega h}-\id)^c{}_b(h^{-1})^{ba}\,,\\
    \Psi&=\Phi_J^*[e^{\widehat{K}}]e^{\ii c}e^{-\ii\frac{1}{4}z\omega\Sigma(K)z}\,,\\
    \Sigma(K)&=\frac{2}{\id-e^KJe^{-K}J}-\id+\frac{K-\sinh{K}}{\id-\cosh{K}}\,.
\end{align}
where $\beta(K)$ is the function introduced in~\eqref{eq:beta} and $\Phi_J[e^{\widehat{K}}]$ can be computed by~\eqref{eq:main-bosons} based on Result~3a of~\cite{Hackl_2024}, as reviewed in appendix~\ref{ap:psi}.
\end{result2}

\begin{proof}
We start with the decomposition $\mathcal{U}(M,z,\Psi)=e^{\ii\theta}\mathcal{D}(z)\mathcal{S}(M,\psi)$ from~\eqref{eq:unitary-decomp} and set it equal to $e^{-\ii \hat{H}}$ for a GQH~\eqref{eq:GQH} to solve for $(M,z,\Psi)$. We make the bosonic setting as
\begin{align}
    -{\tfrac{\ii}{2}}\omega_{ac}K^c{}_b\hat{\xi}^a\hat{\xi}^b=\widehat{K}=-{\textstyle\frac{\ii}{2}}h_{ab}\hat{\xi}^a\hat{\xi}^b\,,
\end{align}
hence $\omega_{ac}K^c{}_b=h_{ab}$ or $K^a{}_b=\Omega^{ac}h_{cb}$ as usual. Using the inverse~\eqref{eq:inverse} in Lemma~\ref{lemma:BCH:exp(w)exp(K)}, we find
\begin{align}\begin{split}
    \mathcal{U}=e^{-\ii\hat{H}}&=e^{\widehat{K}+\widehat{f}-\ii c\id}
    \\&=e^{-\ii c-[\widehat{\beta(K)z},\widehat{z}]}e^{\widehat{z}}e^{\widehat{K}}
    \\&=e^{\ii(z\omega\beta(K)z-c)}\mathcal{D}(z)\mathcal{S}(e^K,\psi)\,,
\end{split}\end{align}
where we use the fact that for bosons
\begin{align}\begin{split}
    z^a&=\Omega^{ab}w_b=\Omega^{ab}f_c\left(\frac{\id-e^{-K}}{K}\right)^c_{\hspace{2mm}b}\\&=f_c\left(\frac{e^{-K}-\id}{K}\right)^c_{\hspace{2mm}b}\Omega^{ba}\\&=f_c(e^{-\Omega h}-\id)^c{}_b(h^{-1})^{ba}\,,
\end{split}\end{align}
and~\eqref{eq:fzfw} which gives
\begin{align}
    [\widehat{w},\widehat{w\beta(K)}]=[\widehat{\beta(K)z},\widehat{z}]=-\ii z^a\omega_{ab}\beta(K)^b{}_cz^c\,.
\end{align}
To align with the previous decomposition~\eqref{eq:unitary-decomp}, we have $M=e^K$, $\mathcal{D}(z)$ identical and $\theta=z\omega\beta(K)z-c$. Notice that
\begin{align}\begin{split}
    \gamma(M,z)+\theta+c&=\gamma(e^K,z)+z\omega\beta(K)z
    \\&=\frac{1}{4}z\omega Y_{e^K}z+z\omega\frac{1}{4}\frac{K-\sinh{K}}{\id-\cosh{K}}z
    \\&=\frac{1}{4}z\omega\Sigma(K)z\,,
\end{split}\end{align}
where
\begin{align}
    \Sigma(K)=\frac{2}{\id-e^KJe^{-K}J}-\id+\frac{K-\sinh{K}}{\id-\cosh{K}}\,.
\end{align}
Relation~\eqref{eq:eitheta} state that
\begin{align}\begin{split}
    \Psi=\psi e^{-\ii(\gamma(M,z)+\theta)}=\psi e^{\ii c}e^{-\ii\frac{1}{4}z\omega\Sigma(K)z}\,.
\end{split}\end{align}
while how to determine $\psi=\Phi_J[e^{\widehat{K}}]$ is explained in details in appendix~\ref{ap:psi}. 
\end{proof}

\subsection{Addendum: Fermionic Gaussian unitaries}\label{sec:fermions}

The Lie group $\mathcal{G}=\mathrm{Sp}(2N,\mathbb{R})$ for bosons is connected and hence completely generated by (\ref{eq:def:sq}). However, the actual Lie group for fermions is $\mathcal{G}=\mathrm{O}(2N,\mathbb{R})$, which consists of two disconnected components. The subset $\mathrm{SO}(2N,\mathbb{R})$ is connected to the identity $\id$ with $\det(M)=1$ for $M\in\mathrm{SO}(2N,\mathbb{R})$, which can be generated by (\ref{eq:def:sq}) as well. To include the other component with elements satisfy $\det(M)=-1$, we need an alternative form of squeezing operators as
\begin{align}
    \mathcal{S}(M_w)=w_a\hat{\xi}^a\,,\qquad\qquad\textbf{(fermions)}
\end{align}
where neither $w_a\in V^a$ and $\hat{\xi}^a$ is Grassmann-valued. Here the induced group elements are
\begin{align}\label{eq:sq_mw}
    (M_w)^a{}_b=w_cG^{ca}w_b-\delta^a{}_b\,,
\end{align}
with the normalization condition $w_aG^{ab}w_b=2$. Please check details in appendix~\ref{ap:fermion-sq}.

Although not every elements in the other component can be written as $M_w$, $M_wM$ where $M\in\mathrm{SO}(2N,\mathbb{R})$ would solve the problem. That is, given a starting point $w_a$, we can represent all elements in the other component by $M_wM$. Hence the actual form of the squeezing operators corresponding to it would be $\mathcal{S}(M_wM)$. Hence a consistent way to define the complex phase of the case would be
\begin{align}
    \Phi_J[\mathcal{S}(M_w)\mathcal{S}(M_wM,\psi)]=\psi\,,
\end{align}
such that it would have exactly the same phase $\psi$ as $\mathcal{S}(M,\psi)$ for $M\in\mathrm{SO}(2N,\mathbb{R})$.

\begin{result3}
As a summary of explanation, if $\det(M)=1$, then it is just in the $\mathrm{SO}(2N,\mathbb{R})$ group, \ie the part connected to the identity. Otherwise, we need to write it in the product $M_wM$, which is now in the $\mathrm{SO}(2N,\mathbb{R})$ group. We may define a new phase as:
\begin{align}
    \Phi_{J,w}=\Phi_J(M_w^{(1-\det(M))/2}M)\,,
\end{align}
to track the actual phase. In general, previously we have the $\mathrm{Pin}(2N,\mathbb{R})\subset\mathrm{O}(2N,\mathbb{R})\times\mathrm{U}(1)$.
\end{result3}

\section{Case studies for one bosonic mode}\label{sec:cases}

We provide numerical validations and illustrative examples of our results for quantum systems with a single bosonic degree of freedom, representing the simplest nontrivial case. We further examine how the logarithm of a specific class of Gaussian unitaries can lead to non-quadratic operators.

\subsection{Review: Parametrization and Stability}

To facilitate the analysis and provide clear illustrations in the case of a single bosonic mode, we parametrize the symplectic algebra and the corresponding quadratic Hamiltonians.

\textbf{Lie algebra.} The symplectic Lie algebra $\mathrm{sp}(2,\mathbb{R})$ is isomorphic to $\mathrm{sl}(2,\mathbb{R})$ and generated by the three matrices
\begin{align}\label{eq:case-study-XYZ-matrices}
    X=\begin{pmatrix}
    0 & 1\\
    1 & 0
    \end{pmatrix}\,,\quad Y=\begin{pmatrix}
    1 & 0\\
    0 & -1
    \end{pmatrix}\,,\quad Z=\begin{pmatrix}
    0 & 1\\
    -1 & 0
    \end{pmatrix}\,.
\end{align}
\textbf{Representation theory.} For a bosonic system with $N=1$, these three generators are represented as anti-Hermitian quadratic operators via relation~\eqref{eq:Khat}, which yields
\begin{subequations}\label{eq:case-study-XYZ-operators}
\begin{align}
    2\ii\widehat{X}&=\hat{p}^2-\hat{q}^2=-\hat{a}^{\dagger 2}-\hat{a}^2\,,\\
    2\ii\widehat{Y}&=\hat{p}\hat{q}+\hat{q}\hat{p}=\ii(\hat{a}^{\dagger 2}-\hat{a}^2)\,,\\
    2\ii\widehat{Z}&=\hat{p}^2+\hat{q}^2=2\hat{n}+1\,,
\end{align}
\end{subequations}
where $\hat{n}=\frac{1}{2}(\hat{p}^2+\hat{q}^2-1)$ is a number operator, whose spectrum consists of non-negative integers.

With respect to this basis, a general element of the Lie algebra, $K=aX+bY+cZ$, is represented by the quadratic operator $\widehat{K}=a\widehat{X}+b\widehat{Y}+c\widehat{Z}$. We define that $r=\sqrt{a^2+b^2-c^2}$. A straightforward calculation shows that $\det(K)=-r^2$, and that the eigenvalues of $K$ are $\pm r$.

In this representation, $\widehat{X}$ and $\widehat{Y}$ generate squeezing operations that differ by a $45^\circ$ rotation in phase space, whereas $\widehat{Z}$ generates phase rotations. Taking into account this symmetry, together with the algebraic structure encoded in $r=\sqrt{a^2+b^2-c^2}$, it is natural to reduce the number of independent parameters by setting either $a=0$ or $b=0$, since both correspond to squeezing generators with the same algebraic form. Without loss of generality, we restrict our attention to the subspace $K=aX+cZ$.

The \emph{stability} of bosonic systems has been discussed in detail in ~\cite{PhysRevA.97.032321}. In the single-mode setting, stability is determined by the eigenvalues of $K$ given by $\pm r$:
\begin{itemize}
    \item \textbf{Stable (imaginary $r$).} The standard example is $\widehat{H}=-\frac{\ii}{2}(\hat{p}^2+\hat{q}^2)$, which means the evolution follows a Harmonic oscillator.
    \item \textbf{Unstable (real $r$).} The standard example is $\widehat{H}=-\frac{\ii}{2}(\hat{p}^2-\hat{q}^2)$, which means the evolution follows an inverted harmonic oscillator. This is clearly unstable and generates squeezing.
    \item \textbf{Metastable ($r=0$, but $K\neq0$).} The standard example is $\widehat{H}=-\frac{\ii}{2}\hat{p}^2$ describing a free propagating particle. The corresponding generator $K$ is non-diagonalizable, meaning there exists a non-trivial Jordan-block in it.
\end{itemize}

\subsection{Verification of the cocycle function}\label{sec:4:time}

In this case study, we numerically verify Result~1. The analytical expression for the inhomogeneous cocycle function $\zeta(M_1,M_2,z_1,z_2)$ can be tested numerically within a truncated Hilbert space. In addition, we perform the same analysis for the expectation value of the total number operator $\braket{\hat{N}_J}$ in order to illustrate the dynamical properties of randomly generated Hamiltonians.

We consider a single bosonic mode with Hilbert space spanned by the Fock basis $\mathcal{H}=\{\ket{n}:n=0,1,2,\dots\}$, where $\ket{n}$ denotes the number (Fock) state with $n$ excitations. A truncated Hilbert space is introduced by imposing a cutoff excitation number $n_{\mathrm{max}}$, restricting the system to the subspace $\{\ket{n}:n=0,1,\dots,n_\mathrm{max}\}$. For sufficiently low mean excitation number, $\braket{\hat{n}} \ll n_{\mathrm{max}}$, this truncation induces negligible errors. In particular, this allows the operators $\hat{\xi}^a$ to be represented by finite-dimensional square matrices of size $n_{\mathrm{max}}+1$. As $n_{\mathrm{max}}$ increases, both the simulation accuracy and the computational cost increase accordingly.

Given a randomly-generated generalnquadratic Hamiltonian (GQH) $\hat{H}_\ell$, the corresponding time-dependent inhomogeneous Gaussian unitary is given by
\begin{align}
    \mathcal{U}_\ell(h_\ell,f_\ell;t)=e^{-\ii \hat{H}_\ell t}=e^{\widehat{\Omega h_\ell}t+\widehat{f}t}\,,
\end{align}
with randomly-generated parameters $(h_\ell,f_\ell)$. The scalar generator $c_\ell$ is omitted here, since the associated global phase cancels out in both evolutions. We demonstrate the time evolution of the cocycle $\zeta(t)$ and of the total number expectation value $\braket{\hat{N}_J(t)}$ for two independently generated sets of Hamiltonian parameters, $(h_1,f_1)$ and $(h_2,f_2)$. In all truncated numerical computations, the reference state $\ket{J}$ is approximated by the vacuum state $\ket{0}$.

\begin{figure*}[t]
    \centering
    \begin{tikzpicture}
    \begin{scope}[xshift=3mm]
    \draw (-4.65,0) node[inner sep=0pt]{\includegraphics[width=1.13\columnwidth]{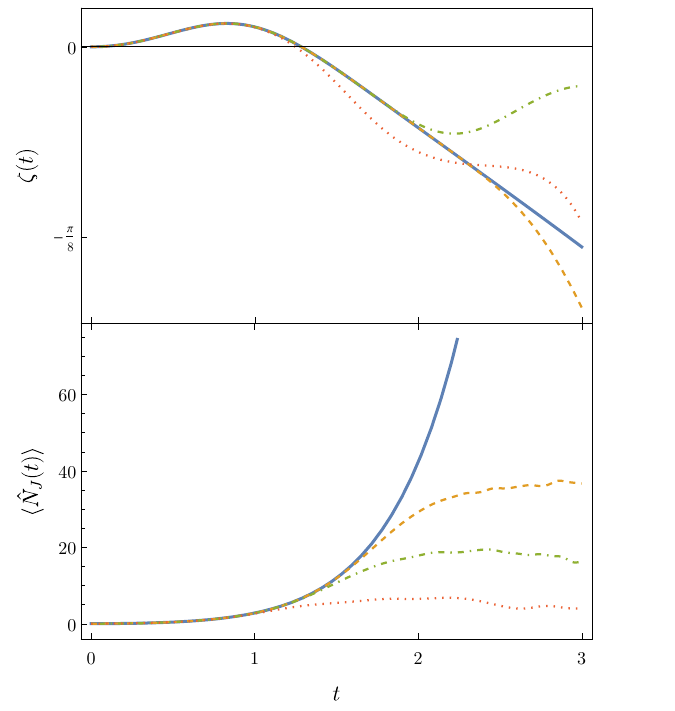}};
    \draw (-4.65,5.5) node{(a) unstable example};
    \end{scope}
    \draw (4.65,0) node[inner sep=0pt]{\includegraphics[width=1.13\columnwidth]{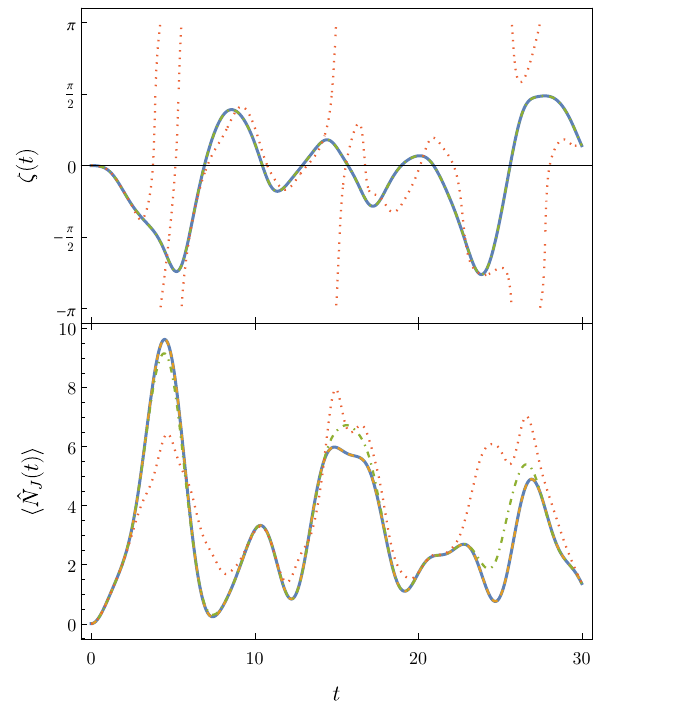}};
    \draw (4.65,5.5) node{(b) stable example};
    \draw (-6.4,-1) node[inner sep=0pt]{\includegraphics[width=0.27\columnwidth]{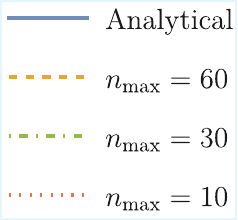}};
    \end{tikzpicture}
    \ccaption{Time evolution of the product phase $\zeta(t)$ and $\braket{\hat{N}_J(t)}$}{The complex phase $\zeta \in [-\pi,\pi]$ and the expectation value of the total number operator $\braket{\hat{N}_J(t)}$ are plotted as functions of time $t$ for randomly generated Hamiltonians with parameters $(h_\ell,f_\ell)$. Solid lines denote the analytical results given by~\eqref{eq:def:zeta} and~\eqref{eq:def:NJ}, while dashed lines represent truncated numerical approximations with different cutoff values $n_{\mathrm{max}}$, computed using~\eqref{eq:num:zeta} and~\eqref{eq:NJ-time}. All panels share a common legend shown in the lower-left panel. The unstable case has $h_1=\left(\begin{smallmatrix}0.4&-0.6\\-0.6&0.5\end{smallmatrix}\right)$ and $h_2=\left(\begin{smallmatrix}0.5&0.4\\0.4&-0.4\end{smallmatrix}\right)$ with real eigenvalues $\pm 0.4$ and $\pm 0.6$, respectively. Whereas the stable case has $h_1=\left(\begin{smallmatrix}0.4&0.2\\0.2&0.5\end{smallmatrix}\right)$ and $h_2=\left(\begin{smallmatrix}0.8&-0.2\\-0.2&0.5\end{smallmatrix}\right)$ with purely imaginary eigenvalues $\pm 0.4\ii$ and $\pm 0.6\ii$, respectively. In all cases, the displacement parameters are fixed to $f_1=f_2=(0.5,0.5)$.}
    \label{fig:time-evolution}
\end{figure*}

\textbf{The inhomogeneous cocycle function $\zeta$.} Our numerical simulation of~\eqref{eq:def:zeta} requires a concrete operational definition. The global phase argument of bosonic systems is defined as $\arg{\braket{J|\mathcal{U}_1|J}}=\arg\Psi_1^*$, which introduces an additional overall minus sign compared with the fermionic convention. As a result, the expression for $\zeta$ to produce the numerical simulation is given by
\begin{align}\label{eq:num:zeta}
    \zeta(t)=\arg\left(\frac{\Phi_J[\mathcal{U}_1(t)]\Phi_J[\mathcal{U}_2(t)]}{\Phi_J[\mathcal{U}_1(t)\mathcal{U}_2(t)]}\right)\,,
\end{align}
where $\ket{J}=\ket{0}$.

\textbf{The total number operator $\braket{\hat{N}_J}$.} The expectation value of the total number operator is defined as
\begin{align}
    \braket{\hat{N}_J(t)}&=\braket{J|\mathcal{U}_2^\dagger(t)\mathcal{U}_1^\dagger(t)\hat{N}_J\mathcal{U}_1(t)\mathcal{U}_2(t)|J}\,,\label{eq:NJ-time}
    \\&=\braket{\tilde{J}(t),\tilde{z}(t)|\hat{N}_J|\tilde{J}(t),\tilde{z}(t)}\,,
\end{align}
where the time-dependence is generated by the Gaussian unitaries $\mathcal{U}_\ell(t)$. With the same identification $\ket{J}=\ket{0}$, the first line~\eqref{eq:NJ-time} is used for the numerical comparison. The corresponding analytical expression can be found in~\cite[Eq. (106)]{Hackl_2021}, and is given by
\begin{align}\label{eq:def:NJ}
    \braket{\hat{N}_J(t)}=-\frac{1}{4}\tr(1-\Delta_{M(t)})+\frac{1}{2}\tilde{z}_a(t)g_{ab}\tilde{z}_b(t)\,,
\end{align}
where
\begin{subequations}
\begin{align}
    M(t)&=M_1(t)M_2(t)=e^{\Omega h_1t}e^{\Omega h_2t}\,,
    \\
    \tilde{z}(t)&=z_1(t)+M_1(t)z_2(t)\,,
    \\
    z_\ell(t)&=f_\ell(e^{-\Omega h_\ell t}-\id)h^{-1}_\ell\,.
\end{align}
\end{subequations}

\textbf{Numerical comparison (figure~\ref{fig:time-evolution}).}
Figure~\ref{fig:time-evolution} shows representative examples of both stable and unstable dynamics. The stable regime requires that both $h_1$ and $h_2$ have positive determinants, ensuring that the eigenvalues of $K_\ell=\Omega h_\ell$ are purely imaginary. In this case, the dynamics are periodic, exhibiting clear cyclic behavior without divergence. By contrast, if either $h_1$ or $h_2$ possesses non-negligible real eigenvalues, the system becomes unstable and displays exponential growth. Consequently, for randomly generated pairs $(h_1,h_2)$, unstable dynamics are statistically more prevalent.

The unstable example behaves as expected: all truncated simulations initially agree with the analytical solution, but deviations appear earlier for lower truncation levels. Increasing the cutoff improves accuracy and delays the onset of deviation. The total number operator grows without bound as a function of time, accounting for the rapid divergence observed in the truncated evolution.

In contrast, the stable example exhibits pronounced cyclic behavior in both the cocycle $\zeta$ and the total number expectation value. As a result, no divergence is observed, and even a relatively low truncation level suffices to reproduce the analytical solution with high accuracy over long times. This agreement provides strong numerical confirmation of Result~1, particularly in the stable regime.

\subsection{Phase of Hamiltonian-generated Gaussian unitary}\label{sec:4:phase}

In this subsection, we study the expectation value $\braket{J|e^{\widehat{K}+\widehat{f}}|J}$ for a single bosonic mode, which can be viewed as the inhomogeneous analogue of the homogeneous case illustrated in~\cite[Fig.~6]{Hackl_2024}. This quantity naturally decomposes into two components: namely the phase $\arg\braket{J|e^{\widehat{K}+\widehat{f}}|J}$ and the modulus $|\braket{J|e^{\widehat{K}+\widehat{f}}|J}|$.

\begin{figure*}[t]
    \centering
    \begin{tikzpicture}
    \draw (0,0) node[inner sep=0pt]{\includegraphics[width=\linewidth]{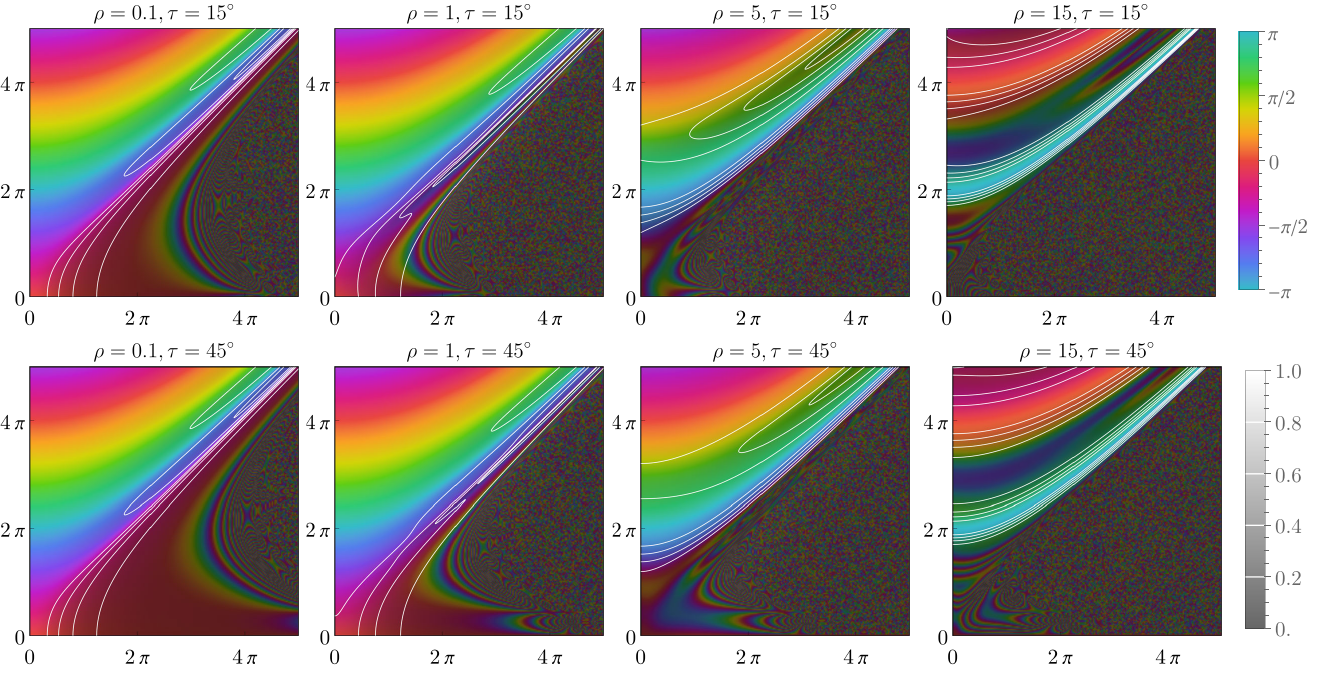}};
    \draw[->, thick] (-8.9,-4.55) -- (-8.9,-3.35) node[right] {$c$};
    \draw[->, thick] (-8.9,-4.55) -- (-7.7,-4.55) node[above] {$a$};
    \draw (8.0,4.5) node[font=\footnotesize]{$\arg(\braket{J|e^{\widehat{K}}|J})$};
    \draw (8.1,-4.3) node[font=\footnotesize]{$|\braket{J|e^{\widehat{K}}|J}|$};
    \end{tikzpicture}
    \ccaption{Bosonic $\braket{J|e^{\widehat{K}+\widehat{f}}|J}$ for $\widehat{K}=a\widehat{X}+c\widehat{Z}$ and $f=\rho(\cos\tau,\sin\tau)$}{These panels present the inhomogeneous case in exactly the same configuration as~\cite[Fig.~6]{Hackl_2024}. We plot $\braket{J|e^{\widehat{K}+\widehat{f}}|J}$ for displacement parameters $\tau=15^\circ,45^\circ$ and $\rho=0.1,1,5,15$. The complex phase~\eqref{eq:arg-ekf} is encoded by color, while the modulus~\eqref{eq:mod-ekf} is represented by brightness. White contour lines indicate respective values of the modulus, as specified by the brightness legend. In all panels, the horizontal axis corresponds to $a$ and the vertical axis to $c$.}
    \label{fig:exp(k+f)}
\end{figure*}

\textbf{Complex phases.} The phase can be extracted directly from Result~2 in the special case $c=0$, yielding
\begin{align}\label{eq:arg-ekf}
    \arg\braket{J|e^{\widehat{K}+\widehat{f}}|J}&=\arg\braket{J|e^{\widehat{K}}|J}+\tfrac{1}{4}z\omega\Sigma(K)z\,,
\end{align}
where
\begin{subequations}
\begin{align}
    z^a&=f_c\left(\frac{e^{-K}-\id}{K}\right)^c_{\hspace{2mm}b}\Omega^{ba}\,,
    \\
    \Sigma(K)&=\frac{2}{\id+e^Ke^{K^\tp}}-\id+\frac{K-\sinh{K}}{\id-\cosh{K}}\,.
\end{align}
\end{subequations}
The evaluation of $\arg\braket{J|e^{\widehat{K}}|J}$ is discussed in detail in appendix~\ref{ap:psi}. 

\textbf{Modulus.} The modulus is obtained by combining Lemma~\ref{lemma:gamma} with~\eqref{eq:old-result}. The central equation~\eqref{eq:evmodulus} of Lemma~\ref{lemma:gamma} implies that the interplay between displacement and squeezing produces an additional factor of the form
\begin{align}
    \frac{|\braket{J|e^{\widehat{z}}e^{\widehat{K}_+}|J}|}{|\braket{J|e^{\widehat{K}_+}|J}|}=e^{-\frac{1}{2}zg(\id+e^Ke^{K^\tp})^{-1}z}\,,
\end{align}
Since~\eqref{eq:old-result} provides the expression for $\braket{J|e^{\widehat{K}}|J}{}^2$, we finally obtain
\begin{align}\label{eq:mod-ekf}
    \braket{J|e^{\widehat{z}}e^{\widehat{K}}|J}{}^2=\frac{e^{-zg(\id+e^Ke^{K^\tp})^{-1}z}}{\overline{\det}\left(\frac{e^K-Je^KJ}{2}\right)}\,.
\end{align}

For a single bosonic mode, the quadratic generator $K$ is characterized by three real parameters $(a,b,c)$ and the linear term $f$ by two additional parameters. As discussed above, one parameter of $K$ can be eliminated by setting $b=0$, leaving the reduced form $K=aX+cZ$. To facilitate a direct comparison with the homogeneous case studied in~\cite[Fig.~6]{Hackl_2024}, we adopt the same configuration and plot results in the $(a,c)$ plane.

The displacement vector $f$ is conveniently parameterized in polar coordinates as
\begin{align}
    f^a=(x,y)=\rho(\cos\tau,\sin\tau)\,.
\end{align}
In practice, we consider only a discrete set of values for $\rho$ and $\tau$. Intuitively, the magnitude $\rho$ of the displacement dominates the effect on the expectation value, while the angular parameter $\tau$, due to its periodicity, plays a comparatively minor role. Accordingly, we present panels for several representative values of $\rho$ and $\tau$. With this setup, the inhomogeneous contribution can be directly involved within the homogeneous result: one simply adds the term $\frac{1}{4}z\omega\Sigma(K)z$ to the phase $\arg\braket{J|e^{\widehat{K}}|J}$ in~\cite[Eq.~(197)]{Hackl_2024}, and multiplies the modulus $|\braket{J|e^{\widehat{K}}|J}|$ in~\cite[Eq.~(196)]{Hackl_2024} by the factor $e^{-\frac{1}{2}zg(\id+MM^\tp)^{-1}z}$.

\textbf{Illustration of effects of displacement (figure~\ref{fig:exp(k+f)}).} The resulting plots are shown in figure~\ref{fig:exp(k+f)}. Each panel can be naturally divided by the line $a=c$ into an upper triangular region ($a<c$) and a lower triangular region ($a>c$). These two regions exhibit qualitatively distinct patterns, reflecting the stability properties of the underlying bosonic system. The eigenvalues of $K$ are given by $\pm\sqrt{a^2-c^2}$: in the upper triangular sector they are purely imaginary, corresponding to stable dynamics; whereas in the lower sector they are real, indicating instability.

The influence of the displacement term is most clearly revealed by comparing panels with different values of $\rho$ and $\tau$. Vertical comparisons between panels with different $\tau$ show that the direction of the displacement has only a minor effect on the overall structure. By contrast, horizontal comparisons across increasing $\rho$ demonstrate that the magnitude of the displacement controls the strength and extent of the induced fluctuations. These fluctuations originate predominantly in the unstable sector and begin to cross the boundary $a=c$ at around $\rho\sim1$. For large displacements, $\rho\gg1$, the effects become significant even in the stable region.

\subsection{Non-quadratic generators of Gaussian unitaries}

Gaussian unitaries were introduced via the relation~\eqref{eq:rep-inh} and reviewed as being generated by quadratic Hamiltonians. In particular, for a symplectic generator $K\in\mathfrak{sp}(2N,\mathbb{R})$ with corresponding group element $M=e^K$, the associated Gaussian unitaries $\mathcal{S}=\pm e^{\widehat{K}}$ implement the symplectic transformation $M$. However, it is also well-known that the exponential map from the Lie algebra $\mathfrak{sp}(2N,\mathbb{R})$ to the group $\mathrm{Sp}(2N,\mathbb{R})$ is not surjective, \ie there exist symplectic group elements $M$ that \emph{cannot} be written in the form $M=e^K$ for any $K\in\mathfrak{sp}(2N,\mathbb{R})$.

Despite this, every homogeneous unitary (\ie symplectic transformation) admits a unitary representation $\mathcal{S}(M)\in\mathcal{U}(\mathcal{H}_\infty)$ acting on the bosonic Hilbert space. This guarantees the existence of a (possibly non-quadratic) Hamiltonian $\hat{H}$ such that
\begin{align}\label{eq:S(M)}
    \mathcal{S}(M)=e^{-\ii\hat{H}}\,.
\end{align}
This naturally raises the question: what types of non-quadratic generators arise when the Gaussian unitary in~\eqref{eq:S(M)} with a quadratic Hamiltonians $\hat{H}$ alone?

We primarily concern the \emph{unreachable} region identified in figure~\ref{fig:fiber}, which is equivalent to
\begin{align}
    \tan^{-1}{\sinh{\rho}}<\tau<\pi-\tan^{-1}{\sinh{\rho}}\,,\label{eq:region-(III)}
\end{align}
as explained in~\cite[Eq.~(188)]{Hackl_2024}. In this unreachable region, the symplectic matrix (for a single bosonic mode) can be diagonalized to the form
\begin{align}
    M_\lambda=\begin{pmatrix}-\lambda&0\\0&-\frac{1}{\lambda}\end{pmatrix}=-\id_2\begin{pmatrix}\lambda&0\\0&\lambda^{-1}\end{pmatrix}=-\id_2e^K\,,
\end{align}
where $K=\log(\lambda)\,Y$. The corresponding unitary representation factorizes as
\begin{align}
    \mathcal{S}(M_\lambda)=\mathcal{S}(-\id_2)\,\mathcal{S}(e^K)=e^{\pm\ii\pi(\hat{n}+\frac{1}{2})}\,e^{\widehat{K}}
    \,.
\end{align}
If we now tried to combine the two generators into a single exponential, we cannot do so immediately, as $\hat{n}$ will generally not commute with $\widehat{K}$. This issue can be resolved by introducing the parity operator $\mathcal{P}=e^{\pm\ii\pi\hat{n}}$, which satisfies
\begin{subequations}
\begin{align}
    \mathcal{P}\ket{n}&=e^{\pm\ii\pi\hat{n}}\ket{n}=(-1)^n\ket{n}\,,
    \\
    e^{\ii\frac{\pi}{2}\mathcal{P}}\ket{n}&=e^{\ii\frac{\pi}{2}(-1)^n}\ket{n}=\ii\mathcal{P}\ket{n}\,,
\end{align}
\end{subequations}
The parity operator decomposes the Hilbert space into even and odd sectors of particle occupation, $\mathcal{H}=\mathcal{H}_\text{even}\oplus\mathcal{H}_\text{odd}$. It clearly commutes with a quadratic generator $\widehat{K}$, which follows from the fact that each term in $\widehat{K}$ either leaves the particle number invariant or changes it by $\pm 2$, thereby keeping parity invariant.

Using this operator, we may safely write
\begin{align}\begin{split}
    \mathcal{S}(-\id_2)&=e^{\pm\ii\pi(\hat{n}+\frac{1}{2})}=\pm\ii e^{\pm\ii\pi\hat{n}}\\
    &\overset{\text{choose }+}{=}\ii\mathcal{P}=e^{\ii\frac{\pi}{2}\mathcal{P}}\,.
\end{split}\end{align}
Since $\mathcal{P}$ commutes with any quadratic generator $\widehat{K}$, the unitary can be written compactly as
\begin{align}
    \mathcal{S}(M_\lambda)&=e^{\widehat{K}+\ii\frac{\pi}{2}\mathcal{P}}\,,
\end{align}
where the generator explicitly contains the non-quadratic parity operator.

It is natural to ask whether operators of the form $e^{\ii\tau\mathcal{P}}$, with $\tau\in\mathbb{R}$, define Gaussian unitaries in the sense of~\eqref{eq:rep-inh}. The answer is negative. Parity is a discrete Clifford operation, and $e^{\ii\tau\mathcal{P}}$ is Gaussian if and only if $\tau=k\frac{\pi}{2}$ for some integer $k$. For generic values of $\tau$, the operator $e^{\ii\tau\mathcal{P}}$ is non-Gaussian and corresponds to a parity-controlled non-quadratic evolution.

\begin{figure}[t]
    \centering
    \begin{tikzpicture}[scale=1.2]
        \draw[very thick,blue] plot[domain=0:6,samples=91,smooth] ({\x},{.01*(-atan(sinh(\x))+180)});
        \draw[very thick,blue] plot[domain=0:6,samples=91,smooth] ({\x},{.01*(atan(sinh(\x))-180)});
        \draw[very thick,red] plot[domain=0:6,samples=91,smooth] ({\x},{-.01*atan(sinh(\x))});
        \draw[very thick,red] plot[domain=0:6,samples=91,smooth] ({\x},{.01*atan(sinh(\x))});
        \fill[red,opacity=.2] plot[domain=0:6,samples=91,smooth] ({\x},{-.01*atan(sinh(\x))}) -- plot[domain=6:0,samples=91,smooth] ({\x},{.01*atan(sinh(\x))});
        \fill[blue,opacity=.2] plot[domain=0:6,samples=91,smooth] ({\x},{.01*(-atan(sinh(\x))+180)}) -- (6,1.8);
        \fill[blue,opacity=.2] plot[domain=0:6,samples=91,smooth] ({\x},{.01*(atan(sinh(\x))-180)}) -- (6,-1.8);
        \fill[green,opacity=.2] plot[domain=0:6,samples=91,smooth] ({\x},{.01*(-atan(sinh(\x))+180)}) -- plot[domain=6:0,samples=91,smooth] ({\x},{.01*atan(sinh(\x))});
        \fill[green,opacity=.2] plot[domain=0:6,samples=91,smooth] ({\x},{.01*(atan(sinh(\x))-180)}) -- plot[domain=6:0,samples=91,smooth] ({\x},{-.01*atan(sinh(\x))});
        \draw (.7,.9) node{(Future)} (.7,-.9) node{(Past)} (3.8,.4) node{(Reachable space)} (3.8,1.4) node{(Unreachable space)};        
        \draw[thick,->] (0,-1.8) node[left]{$-\pi$} -- (0,2);
        \draw (0,1.8) node[left]{$\pi$};
        \draw (0,2.1) node{$z$};
        \draw[thick,->] (0,0) node[left]{$0$} -- (6.2,0) node[below]{$\rho$};
        \draw (0,-1.8) rectangle (6,1.8);
        \draw[orange,thick,->] (0,0,0) -- (.5,0,0);
        \draw[orange,thick,->] (0,0,0) -- (0,.5,0) node[right]{$Z$};
        \draw[orange,thick,->] (0,0,0) -- (0,0,.8) node[below]{$Y$};
        \draw[orange] (0.6,0.05) node[below]{$X$};
        \draw[orange] (0.08,0) node[below]{$\id$};
    \end{tikzpicture}
    \ccaption{Planar section of bosonic fiber bundle for $\mathrm{Sp}(2,\mathbb{R})$}{A general symplectic group element $M$ is parametrized by $(\rho,\theta,z)$, where $\rho$ and $\theta$ are polar coordinates of a horizontal plane and $z$ is a $2\pi$-periodic vertical coordinate, as $M=
    \cosh{\rho}\left(\begin{smallmatrix}\cos{z}&\sin{z}\\-\sin{z}&\cos{z}\end{smallmatrix}\right)-\sinh{\rho}\left(\begin{smallmatrix}-\sin{\theta}&\cos{\theta}\\\cos{\theta}&\sin{\theta}\end{smallmatrix}\right)$. We demonstrate a section of $\theta=0$. We also recognize the Lie algebra $\mathfrak{sp}(2,\mathbb{R})=\mathrm{span}(X,Y,Z)$ as tangent space at the identity. We further illustrate the different regions.}
    \label{fig:fiber}
\end{figure}

\section{Conclusion}\label{sec:conc}

We resolve the ambiguity of the global phase in composite Gaussian circuits by constructing a unified phase formalism for inhomogeneous Gaussian unitaries. Building on earlier parametrizations based on the circle function $\varphi(M)$ and the cocycle $\eta(M_1,M_2)$, we incorporate the displaced-squeezing phase $\gamma(M,z)$ to obtain the generalized inhomogeneous cocycle function $\zeta(M_1,M_2,z_1,z_2)$, which treats symplectic transformations and displacements consistently. These ingredients combine into a single phase function $\Psi(M,z)$ that lifts the affine symplectic parameters $(M,z)$ to the extended element $(M,z,\Psi)$ and yields a unique and associative group product using $\zeta$ defined in~\eqref{eq:def:zeta}. Our framework determines the complex phase of arbitrary Gaussian unitaries and their compositions, enabling phase-correct operator multiplication and measurable interference predictions. Previous works have addressed special cases of Gaussian phases~\cite{PhysRevA.61.022306,10.21468/SciPostPhys.17.3.082}, whereas our approach provides a general and systematic solution for the fully inhomogeneous setting focusing on the multiplication laws.

As a second main result, we derive explicit phase formulas for unitaries generated by general quadratic Hamiltonians. Rather than decomposing transformations into elementary gates, the global phase of $e^{-\ii\hat{H}}$ is obtained directly from its generator, providing compact expressions suited for numerical implementation. This complements recent methods for evaluating Gaussian overlaps~\cite{10.21468/SciPostPhys.17.3.082} while supplying the phase information required for exact circuit composition. As a direct application, phase-coherent simulation of optical circuits becomes possible even for long sequences of displaced-squeezing operations. Although we focus on bosonic systems, the same derivations extend to fermionic Gaussian unitaries with minor modifications; in particular, the generator-based phase formula provides the key ingredient needed to lift affine orthogonal transformations to the Pin group.

Beyond the technical contributions, our work connects mathematical and practical treatments of Gaussian quantum mechanics. Physics-oriented approaches often address specific circuits where phases are fixed case by case, while rigorous treatments of the metaplectic and Weyl groups~\cite{woit2017quantum,folland1989harmonic,Simon1993,zhang1990coherent,derezinski2013mathematics} are typically less accessible to practitioners. We translate this structure into explicit formulas that can be directly implemented in modeling and simulation. This offers a bridge between abstract representation theory and experimental needs, complementing phase-aware Gaussian computation frameworks developed by~\cite{Dias:2023arXiv230712912D,Dias:2024arXiv240319059D,PhysRevA.61.022306,bravyi_complexity_2017,10.21468/SciPostPhys.17.3.082,Quesada2025}.

Our formalism is particularly relevant for non-Gaussian variational methods that rely crucially on phase-sensitive quantities. Superpositions of Gaussian states have been studied extensively as efficient variational representations of many-body states~\cite{bravyi_complexity_2017} and as tools for simulation and tomography of quantum systems~\cite{Dias:2023arXiv230712912D,Dias:2024arXiv240319059D}. In these approaches, overlaps between different Gaussian branches determine norms, energies, and gradients, so a well-defined relative phase between Gaussian unitaries is essential. Similarly, generalized Gaussian states introduced in~\cite{shi2018variational} and further developed in~\cite{hackl_geometry_2020,guaita_generalization_2021} require the evaluation of expectation values and geometric quantities that depend explicitly on the phases of underlying Gaussian transformations. Without a consistent global-phase convention, interference between Gaussian components becomes ambiguous and variational optimization ill-defined. The inhomogeneous lifting $(M,z,\Psi)$ developed here provides precisely this missing structure, enabling unambiguous computation of generalized Wick expectation values, overlaps, and tangent-space objects required for gradient-based optimization on enlarged variational manifolds.

A further application is efficient simulation of $n$-mode Gaussian circuits with correct global phases. Many numerical tools ignore phases or fix them heuristically, leading to inconsistencies when circuits are concatenated or optimized. The closed-form phase function $\Psi$ and its composition rule via $\zeta$ enable reproducible, phase-aware circuit simulation with improved stability. This directly benefits tasks such as circuit compilation and interferometric phase estimation.

In summary, this work shows that phase information---often treated as a technical nuisance---can be made explicit, computable, and structurally transparent for Gaussian quantum mechanics. We expect these results to facilitate reliable phase-aware simulation, variational optimization, and the integration of Gaussian methods into larger non-Gaussian quantum algorithms.

\acknowledgments

JS acknowledges the Melbourne Research Scholarship provided by the School of Mathematics and Statistics and the Faculty of Science at the University of Melbourne. LH thanks Tommaso Guaita and Thomas Quella for helpful discussions. The research of LH is supported by ‘The Quantum Information Structure of Spacetime’ Project (QISS), by grant $\#$62312 from the John Templeton Foundation.

\appendix

\section{Background and notation}

This section of the appendix is to introduce our notations and conventions, as well as the theorems we used.

\subsection{Operators and representations}\label{ap:A:operator}

In Section~\ref{sec:intro} and Section~\ref{sec:2:structures}, we present our quadrature operators $\hat{\xi}^a$ in terms of position $\hat{q}$ and momentum $\hat{p}$. In fact, there is another standard basis that is regularly expressed in quantum physics, the ladder operators. Here we present them parallel, especially for~\eqref{eq:xi} and~\eqref{eq:def:matrix}.

We introduce annihilation and creation operators $\hat{a}_i^\dagger$ and $\hat{a}_j$ satisfying the commutation relation $[\hat{a}_i,\hat{a}^\dagger_j]=\delta_{ij}$ for bosons. We can construct the Hermitian operators
\begin{subequations}\begin{align}
    \hat{q}_j={\textstyle\frac{1}{\sqrt{2}}}(\hat{a}^\dagger_j+\hat{a}_j)\,,\qquad\hat{p}_j={\textstyle\frac{\ii}{\sqrt{2}}}(\hat{a}^\dagger_j-\hat{a}_j)\,,\label{eq:qp_basis}\\
    \hat{a}_j={\textstyle\frac{1}{\sqrt{2}}}(\hat{q}_j+\ii\hat{p}_j)\,,\qquad\hat{a}^\dagger_j={\textstyle\frac{1}{\sqrt{2}}}(\hat{q}_j-\ii\hat{p}_j)\,,\label{eq:ad_basis}
\end{align}\end{subequations}
We refer to them as real basis~\eqref{eq:qp_basis} in $(\hat{q},\hat{p})$ and complex basis~\eqref{eq:ad_basis} in $(\hat{a},\hat{a}^\dagger)$. Note that the main body of this manuscript only uses the real basis. We collect these operators into a single operator-valued phase space vector $\hat{\xi}^a$, given in the two standard bases by
\begin{align}
    \hat{\xi}^a\equiv\begin{cases}
        (\hat{q}_1,\cdots,\hat{q}_N,\hat{p}_1,\cdots,\hat{p}_N)\,,&\textbf{(real basis)}\\
        (\hat{a}_1,\cdots,\hat{a}_N,\hat{a}^\dagger_1,\cdots,\hat{a}^\dagger_N)\,,&\textbf{(complex basis)}
    \end{cases}
\end{align}
where we indicate the respective real and complex basis of the expression. With the above bases in terms of $\hat{\xi}^a$, the matrix representations of structures can be summarized in table~\ref{tab:matrices}.

\begin{table}[t]
    \centering
    \renewcommand{\arraystretch}{1.5}
\begin{tabular}{c *2{>{\renewcommand{\arraystretch}{1}}c}} 
\hline
\hline
 basis of $\hat{\xi}^a$ & \quad Real $(\hat{q},\hat{p})$ & \quad Complex $(\hat{a},\hat{a}^\dagger)$ \\
 \hline
 & & \\[-1em]
 \makecell{symplectic\\ form $\Omega^{ab}$} & $\begin{pmatrix}0&\id\\-\id&0\end{pmatrix}$ & $\ii\begin{pmatrix}0&-\id\\\id&0\end{pmatrix}$ \\[-1em]
 & & \\
 $\omega_{ab}=(\Omega^{-1})_{ab}$ & $\begin{pmatrix}0&-\id\\\id&0\end{pmatrix}$ & $\ii\begin{pmatrix}0&-\id\\\id&0\end{pmatrix}$ \\[-1em]
 & & \\
 \hline
 & & \\[-1em]
 \makecell{positive-definite\\ metric $G^{ab}$} & $\begin{pmatrix}\id&0\\0&\id\end{pmatrix}$ & $\begin{pmatrix}0&\id\\\id&0\end{pmatrix}$ \\[-1em]
 & & \\
 $g_{ab}=(G^{-1})_{ab}$ & $\begin{pmatrix}\id&0\\0&\id\end{pmatrix}$ & $\begin{pmatrix}0&\id\\\id&0\end{pmatrix}$ \\[-1em]
 & & \\
 \hline
 & & \\[-1em]
 \makecell{complex\\structure $J^a{}_b$} & $\begin{pmatrix}0&\id\\-\id&0\end{pmatrix}$ & $\ii\begin{pmatrix}-\id&0\\0&\id\end{pmatrix}$ \\[-1em]
 & & \\
\hline
\hline
\end{tabular}
\caption{The matrix representations and respective inverses of the sympletic form $\Omega$, the positive-definite metric $G$ and the complex structure $J$ with respect to real and complex bases of $\hat{\xi}^a$.}
    \label{tab:matrices}
\renewcommand{\arraystretch}{1}
\end{table}

\subsection{Powers of adjoint}\label{ap:A:adjoint}
In the proofs of the following sections, we will encounter power series of nested commutators (appendix~\ref{ap:A:BCH} and~\ref{ap:B:nested}), which are best understood as powers of the adjoint action. We consider a Lie algebra $\mathfrak{g}$ with its Lie bracket $[\cdot,\cdot]$. We can define the adjoint representation $\adrep{X}$ of $X\in\mathfrak{g}$ as $\adrep{X}Y=[X,Y]$. An alternative notation is
\begin{subequations}
\begin{align}
    [[X]^n,Y]&=[\underbrace{X,\dots[X,[X}_{n\text{ times of }X},Y]]\cdots]=(\adrep{X})^nY\,,\label{adn.left}
    \\
    [Y,[X]^n]&=[\cdots[[Y,\underbrace{X],X]\dots,X}_{n\text{ times of }X}]=(\adrep{-X})^nY\,.\label{adn.right}
\end{align}
\end{subequations}
Note that (\ref{adn.left}) has the regular commutator sequence from right to left, while (\ref{adn.right}) is from left to right. They are related by
\begin{align}
    [Y,[X]^n]&=[[-X]^n,Y]=(-1)^n[[X]^n,Y]\,.
\end{align}

\subsection{Baker-Campbell-Hausdorff (BCH) formula}\label{ap:A:BCH}

Here we review several standard formulas that we will use in the proof of Lemma~\ref{lemma:BCH:exp(w)exp(K)} in appendix~\ref{ap:ab}.

\textbf{Dynkin's formula.}
The explicit Campbell-Hausdorff formula in Dynkin' form~\cite{Serre} is given by the following general combinatorial formula
\begin{align}\begin{split}\label{eq:Dynkin}
    &\log(e^Xe^Y)=\sum_{n=1}^\infty\frac{(-1)^{n-1}}{n}\times\\&\qquad\sum_{\substack{r_1+s_1>0\\ \vdots\\r_n+s_n>0}}\frac{[[X]^{r_1},[[Y]^{s_1},\cdots[[X]^{r_n},[Y]^{s_n}]\cdots]]}{\big(\sum_{j=1}^n(r_j+s_j)\big)\cdot\prod_{i=1}^nr_i!s_i!}\,.
\end{split}\end{align}
With the understanding that $[X]=X$. Since $[X,X]=0$, the term will also be zero if $s_n>1$ or if $s_n=0$ and $r_n>1$. In addition we can also define $r=\sum_{j=1}^n r_j$ and $s=\sum_{j=1}^n s_j$ schematically.

\textbf{Campbell identity.} The Campbell identity is $\ADrep{e^X}=e^{\adrep{X}}$, where $\ADrep{A}Y=AYA^{-1}$. With this identity, the Proposition 3.35 of~\cite{Hall2015} presents a special case of the BCH formula
\begin{align}
    e^XYe^{-X}=\ADrep{e^X}Y=e^{\adrep{X}}Y
    =\sum_{n=0}^\infty\frac{[[X]^n,Y]}{n!}\,.
\end{align}

\textbf{Differential equations of Lie algebra.} We will need the derivative of certain exponentials of Lie algebra elements. A special form is evaluated by
\begin{align}\begin{split}
    &\frac{\id-e^{-\adrep{X}}}{\adrep{X}}(Y)=\frac{e^{\adrep{-X}}-\id}{\adrep{-X}}(Y)\\
    &=\sum_{n=0}^\infty\frac{(\adrep{-X})^n}{(n+1)!}(Y)=\sum_{n=0}^\infty\frac{[Y,[X]^n]}{(n+1)!}\,.
\end{split}\end{align}
If $X(t)$ is a smooth Lie algebra-valued function with derivative $\dot{X}(t)=\frac{\mathrm{d}X(t)}{\mathrm{d}t}$, by Theorem 5.4 of~\cite{Hall2015}, its exponential $e^{X(t)}$ satisfies the differential equation
\begin{align}
    e^{-X(t)}\frac{\mathrm{d}}{\mathrm{d}t}(e^{X(t)})=\frac{\id-e^{-\adrep{X(t)}}}{\adrep{X(t)}}\big(\dot{X}(t)\big)=\sum_{n=0}^\infty\frac{[\dot{X},[X]^n]}{(n+1)!}\,.\label{eq:adxy}
\end{align}

\section{Lie Algebra and Commutators}\label{ap:commutators}

In this appendix, we provide detailed calculations for all Lie algebraic commutation results of the family of linear and quadratic operators. These are essential for fulfilling intermediate steps in many derivation in the main text and other appendix. Recall our boson-fermion unified notation, the operators are denoted by
\begin{align}
    \widehat{z}&=-z^a\lambda_{ab}\hat{\xi}^b\,, &&\textbf{(linear vector)}\\
    \widehat{w}&=-\iii w_a\hat{\xi}^a\,, &&\textbf{(linear co-vector)}\\
    \widehat{K}&=\tfrac{1}{2}\lambda_{ac}K^c{}_b\hat{\xi}^a\hat{\xi}^b\,. &&\textbf{(quadratic)}
\end{align}

\subsection{Commutators between linear operators}\label{ap:B:lin-com}

This sub-appendix provides technical details particularly for Section~\ref{sec:2:disp}. We have
\begin{align}\begin{split}\label{eq:ap:xi_com_z}
    [\hat{\xi}^b,\widehat{z}]&=[\hat{\xi}^b,-z^c\lambda_{cd}\hat{\xi}^d]=[\hat{\xi}^b,\hat{\xi}^d\lambda_{dc}z^c]
    \\&=\bfcase{[\hat{\xi}^b,\hat{\xi}^d]\lambda_{dc}z^c}{\{\hat{\xi}^b,\hat{\xi}^d\}\lambda_{dc}z^c}
    \\&=\Lambda^{bd}\lambda_{dc}z^c=z^b\,.
\end{split}\end{align}
The commutator between two linear operators will give a scalar function as
\begin{align}\begin{split}\label{eq:ap:z_com}
    [\widehat{z}_1,\widehat{z}_2]&=[-z_1^a\lambda_{ab}\hat{\xi}^b,\widehat{z}_2]=-z_1^a\lambda_{ab}[\hat{\xi}^b,\widehat{z}_2]
    \\&=-z_1^a\lambda_{ab}z_2^b=-z_1\lambda z_2=z_2\lambda z_1\,.
\end{split}\end{align}
This means that any higher-order commutator of linear operators would vanish. Let's define $\tilde{w}^a\equiv[\hat{\xi}^a,\widehat{w}]$, then
\begin{align}\begin{split}
    \iii\tilde{w}^a&=\iii[\hat{\xi}^a,\widehat{w}]=[\hat{\xi}^a,-\iii^2w_b\hat{\xi}^b]=[\hat{\xi}^a,\hat{\xi}^bw_b]
    \\&=\bfcase{[\hat{\xi}^a,\hat{\xi}^b]w_b}{\{\hat{\xi}^a,\hat{\xi}^b\}w_b}
    \\&=\Lambda^{ab}w_b\,.
\end{split}\end{align}
It follows that
\begin{align}\begin{split}\label{eq:ap:w-com}
    [\widehat{w}_1,\widehat{w}_2]&=[-\iii w_{1;a}\hat{\xi}^a,\widehat{w}_2]=-w_{1;a}\iii[\hat{\xi}^a,\widehat{w}_2]
    \\&=-w_{1;a}\Lambda^{ab}w_{2;b}=-w_1\Lambda w_2=w_2\Lambda w_1\,.
\end{split}\end{align}
It follows from the famous BCH formula, we get
\begin{subequations}\label{eq:ap:DD}
\begin{align}
    \begin{split}
    \log\big(e^{\widehat{z}_1}e^{\widehat{z}_2}\big)&=\widehat{z}_1+\widehat{z}_2+\textstyle{\frac{1}{2}}[\widehat{z}_1,\widehat{z}_2]\\&=\widehat{z}_1+\widehat{z}_2+\textstyle{\frac{1}{2}}z_2\lambda z_1\,,
    \end{split}
    \\
    \begin{split}
    \log\big(e^{\widehat{w}_1}e^{\widehat{w}_2}\big)&=\widehat{w}_1+\widehat{w}_2+\textstyle{\frac{1}{2}}[\widehat{w}_1,\widehat{w}_2]\\&=\widehat{w}_1+\widehat{w}_2+\textstyle{\frac{1}{2}}w_2\Lambda w_1\,.
    \end{split}
\end{align}
\end{subequations}
The other form of BCH tell us
\begin{subequations}\label{eq:ap:disp}
\begin{align}
    e^{-\widehat{z}}\hat{\xi}^ae^{\widehat{z}}&=\hat{\xi}^a+[\hat{\xi}^a,\widehat{z}]=\hat{\xi}^a+z^a\,,
    \\
    e^{-\widehat{w}}\hat{\xi}^ae^{\widehat{w}}&=\hat{\xi}^a+[\hat{\xi}^a,\widehat{w}]=\hat{\xi}^a+\tilde{w}^a\,,
\end{align}
\end{subequations}
hence the equivalent displacement vector is $\tilde{w}^a$.

\subsection{Commutators with quadratic operators}\label{ap:B:qua-com}

This sub-appendix is the foundation to appendix~\ref{ap:B:nested} as well as basic properties of squeezing operators. Using the fact that
\begin{align}
    \lambda K\Lambda=\left.\bfcase{\omega K\Omega}{gKG}\right\}=-K^\tp\,,
\end{align}
expanding indices of $\lambda K\Lambda=-K^\tp$ tells us
\begin{align}
    \lambda_{ab}K^b{}_c\Lambda^{cd}&=(\lambda K\Lambda)_a{}^d=(-K^\tp)_a{}^d=-K^d{}_a\,.
\end{align}
For a polynomial function (power series) of $K$, say $f(K)$, by inserting pairs such as $K^2=K\Lambda\lambda K$, we would have
\begin{align}
\lambda_{ab}f(K)^b{}_c\Lambda^{cd}=f(-K)^d{}_a\,.\label{eq:okof}
\end{align}

For the quadratic operators, we firstly compute
\begin{align}\begin{split}
    [\hat{\xi}^a,\hat{\xi}^b\hat{\xi}^c]&=\bfcase{[\hat{\xi}^a,\hat{\xi}^b]\hat{\xi}^c+\hat{\xi}^b[\hat{\xi}^a,\hat{\xi}^c]}{\{\hat{\xi}^a,\hat{\xi}^b\}\hat{\xi}^c-\hat{\xi}^b\{\hat{\xi}^a,\hat{\xi}^c\}}
    \\&=\bfcase{[\hat{\xi}^a,\hat{\xi}^b]\hat{\xi}^c-\hat{\xi}^b[\hat{\xi}^c,\hat{\xi}^a]}{\{\hat{\xi}^a,\hat{\xi}^b\}\hat{\xi}^c-\hat{\xi}^b\{\hat{\xi}^c,\hat{\xi}^a\}}
    \\&=\Lambda^{ab}\hat{\xi}^c-\Lambda^{ca}\hat{\xi}^b\,.
\end{split}\end{align}
This means the commutator of a linear operator with a quadratic operator will give a linear operator. For both bosons and fermions we have
\begin{align}\begin{split}
    [\hat{\xi}^a,\widehat{K}]&=\textstyle{\frac{1}{2}}\lambda_{bd}K^d{}_c[\hat{\xi}^a,\hat{\xi}^b\hat{\xi}^c]\\&=\textstyle{\frac{1}{2}}\lambda_{bd}K^d{}_c(\Lambda^{ab}\hat{\xi}^c-\Lambda^{ca}\hat{\xi}^b)\\&=\textstyle{\frac{1}{2}}(\Lambda^{ab}\lambda_{bd}K^d{}_c\hat{\xi}^c- \lambda_{bd}K^d{}_c\Lambda^{ca}\hat{\xi}^b)\\&=\textstyle{\frac{1}{2}}(\delta^a{}_d K^d{}_b+K^a{}_b)\hat{\xi}^b\\&=K^a{}_b\hat{\xi}^b\,.
\end{split}\end{align}
Thus, we know
\begin{align}
    [\widehat{w},\widehat{K}]=-\iii w_a[\hat{\xi}^a,\widehat{K}]=-\iii w_aK^a{}_b\hat{\xi}^b=\widehat{wK}\,,
\end{align}
where $(wK)_a=w_bK^b{}_a$. 

We can also know the commutator for two quadratic operators
\begin{align}\begin{split}
    [\widehat{K}_1,\widehat{K}_2]&=\textstyle{\frac{1}{2}}\lambda_{ac}K^c_{1;b}[\hat{\xi}^a\hat{\xi}^b,\widehat{K}_2]
    \\&=\textstyle{\frac{1}{2}}\lambda_{ac}K^c_{1;b}(\hat{\xi}^a[\hat{\xi}^b,\widehat{K}_2]+[\hat{\xi}^a,\widehat{K}_2]\hat{\xi}^b)
    \\&=\textstyle{\frac{1}{2}}\lambda_{ac}K^c_{1;b}(K^b_{2;d}\hat{\xi}^a\hat{\xi}^d+K^a_{2;d}\hat{\xi}^d\hat{\xi}^b)
    \\&=\textstyle{\frac{1}{2}}(\lambda_{ac}(K_1K_2)^c{}_b+\lambda_{dc}K^d_{2;a}K^c_{1;b})\hat{\xi}^a\hat{\xi}^b
    \\&=\textstyle{\frac{1}{2}}\lambda_{ac}(K_1K_2-K_2K_1)^c{}_b\hat{\xi}^a\hat{\xi}^b
    \\&=\widehat{[K_1,K_2]}\,,
\end{split}\end{align}
where in the forth to fifth line, we use
\begin{align}\begin{split}
    \lambda_{dc}K^d_{2;a}K^c_{1;b}&=\lambda_{de}K^d_{2;f}\Lambda^{fc}\lambda_{ca}K^e_{1;b}\\&=\lambda_{ed}K^d_{2;f}\Lambda^{fc}\lambda_{ac}K^e_{1;b}\\&=\lambda_{ac}(-K_2)^c{}_eK^e_{1;b}\\&=\lambda_{ac}(-K_2K_1)^c{}_b\,.
\end{split}\end{align}
This result justifies the fact that $\widehat{K}$ forms a Lie algebra representation. 

\subsection{Nested commutators}\label{ap:B:nested}

Due to the exponential relations between Lie groups and Lie algebra, in this section, we analyze the series of algebraic elements, which emerged from nested commutators. These are particularly crucial for the proof of Lemma~\ref{lemma:BCH:exp(w)exp(K)}, which displayed in appendix~\eqref{ap:ab}. A simple induction start from
\begin{align}\begin{split}
    [\hat{\xi}^a,[\widehat{K}]^2]&=[[\hat{\xi}^a,\widehat{K}],\widehat{K}]=K^a{}_c[\hat{\xi}^c,\widehat{K}]\\&=K^a{}_cK^c{}_b\hat{\xi}^b=(K^2)^a{}_b\hat{\xi}^b\,,
\end{split}\end{align}
so that it implies that
\begin{align}
    [\hat{\xi}^a,[\widehat{K}]^n]=(K^n)^a{}_b\hat{\xi}^b\,.
\end{align}
This time, the higher order terms survive in the BCH formula, given by
\begin{align}\begin{split}
    e^{-\widehat{K}}\hat{\xi}^ae^{\widehat{K}}&=\sum_{n=0}^\infty\frac{[[-\widehat{K}]^n,\hat{\xi}^a]}{n!}=\sum_{n=0}^\infty\frac{[\hat{\xi}^a,[\widehat{K}]^n]}{n!}\\&=\sum_{n=0}^\infty\frac{(K^n)^a{}_b}{n!}\hat{\xi}^b=(e^{K})^a{}_b\hat{\xi}^b\,.
\end{split}\end{align}
Thus we have
\begin{align}\begin{split}
    e^{-\widehat{K}}\widehat{w}e^{\widehat{K}}&=-\iii w_ae^{-\widehat{K}}\hat{\xi}^ae^{\widehat{K}}=-\iii w_a(e^{K})^a{}_b\hat{\xi}^b\\&=-\iii(we^{K})_b\hat{\xi}^b=\widehat{we^K}\,.
\end{split}\end{align}
This also means
\begin{align}
    e^{\widehat{w}}e^{\widehat{K}}=e^{\widehat{K}}e^{-\widehat{K}}e^{\widehat{w}}e^{\widehat{K}}=e^{\widehat{K}}e^{\widehat{we^K}}\,.
\end{align}
Together, we have
\begin{align}\begin{split}
    (e^{\widehat{w}}e^{\widehat{K}})^{-1}\hat{\xi}^a(e^{\widehat{w}}e^{\widehat{K}})
    &=e^{-\widehat{K}}e^{-\widehat{w}}\hat{\xi}^ae^{\widehat{w}}e^{\widehat{K}}
    \\
    &=e^{-\widehat{K}}\big(\hat{\xi}^a+\tilde{w}^a\big)e^{\widehat{K}}
    \\
    &=(e^{K})^a{}_b\hat{\xi}^b+\tilde{w}^a\,,
\end{split}\end{align}
and
\begin{align}\begin{split}
    (e^{\widehat{K}}e^{\widehat{w}})^{-1}\hat{\xi}^a(e^{\widehat{K}}e^{\widehat{w}})
    &=e^{-\widehat{w}}e^{-\widehat{K}}\hat{\xi}^ae^{\widehat{K}}e^{\widehat{w}}
    \\
    &=e^{-\widehat{w}}\big((e^{K})^a{}_b\hat{\xi}^b\big)e^{\widehat{w}}
    \\
    &=(e^{K})^a{}_b\hat{\xi}^b+(e^{K})^a{}_b\tilde{w}^b\,.
\end{split}\end{align}
The above arguments justify that $e^{\widehat{w}}e^{\widehat{K}}$ has better brevity. That is, we would prefer our unitary operators to apply the squeezing and rotation operator first, followed by a displacement operator. Thus, as part of the proof of the decomposition, we have
\begin{align}\begin{split}\label{eq:proof:uxiu}
    &\mathcal{U}^\dagger(M,z,\Psi)\hat{\xi}^a\mathcal{U}(M,z,\Psi)
    \\
    &=e^{-\ii\theta}e^{\ii\theta}\mathcal{S}^\dagger(M,\psi)\mathcal{D}^\dagger(z)\hat{\xi}^a\mathcal{D}(z)\mathcal{S}(M,\psi)
    \\
    &=\mathcal{S}^\dagger(M,\psi)(\hat{\xi}^a+z^a)\mathcal{S}(M,\psi)
    \\
    &=M^a{}_b\hat{\xi}^b+z^a\,,
\end{split}\end{align}
where we use~\eqref{eq:def:disp} for the third line and~\eqref{eq:def:sq-psi-op} for the last line. 

In the other direction, we have
\begin{subequations}
\begin{align}
    [\widehat{w},[\widehat{K}]^n]&=-\iii w_a[\hat{\xi}^a,[\widehat{K}]^n]=\widehat{wK^n}\,,
    \\
    [[\widehat{K}]^n,\widehat{w}]&=[\widehat{w},[-\widehat{K}]^n]=\widehat{w(-K)^n}\,.
\end{align}
\end{subequations}
Therefore, 
\begin{align}
    [\widehat{w},[\widehat{w},[\widehat{K}]^n]]=[\widehat{w},\widehat{wK^n}]=wK^n\Lambda w\,.
\end{align}

\begin{lemma}
    $[\widehat{w},[\widehat{w},[\widehat{K}]^n]]=0$ for all even $n$. 
\end{lemma}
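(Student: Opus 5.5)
We reduce the claim to a symmetry statement about a scalar bilinear form. By the identity just derived, $[\widehat{w},[\widehat{w},[\widehat{K}]^n]]=wK^n\Lambda w=w_a(K^n)^a{}_b\Lambda^{bc}w_c$. This is a scalar, so it equals its own transpose:
\begin{align}
wK^n\Lambda w = w\,(K^n\Lambda)^\tp w = w\,\Lambda^\tp (K^\tp)^n w\,.
\end{align}
The plan is to show that for even $n$ the matrix $K^n\Lambda$ is antisymmetric. Its quadratic form on $w$ then vanishes.

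The key input is the Lie algebra condition $K\Lambda=-\Lambda K^\tp$ defining $\mathfrak{g}$. Iterating it by induction on $n$ gives
\begin{align}
K^n\Lambda=(-1)^n\Lambda(K^\tp)^n\,.
\end{align}
We also use the symmetry type of $\Lambda$. For bosons $\Lambda=\ii\Omega$ is antisymmetric. For fermions $\Lambda=G$ is symmetric, so $\Lambda^\tp=\epsilon\Lambda$ with $\epsilon=-1$ for bosons and $\epsilon=+1$ for fermions. Combining these,
\begin{align}
(K^n\Lambda)^\tp=\Lambda^\tp (K^\tp)^n=\epsilon\Lambda(K^\tp)^n=\epsilon(-1)^nK^n\Lambda\,.
\end{align}
Hence for even $n$ we get $(K^n\Lambda)^\tp=\epsilon K^n\Lambda$, and $wK^n\Lambda w=\epsilon\, wK^n\Lambda w$.

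For bosons ($\epsilon=-1$) this forces the scalar to vanish, as long as $w$ is an ordinary commuting (real or complexified) covector. This covers all even $n$, including $n=0$, where it reduces to $w\Lambda w=0$ by antisymmetry of $\Omega$.

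For fermions the same manipulation gives no constraint, since $\epsilon=+1$. This is the step I expect to be the real obstacle. The linear term is physically absent for fermions, and in Lemma~\ref{lemma:BCH:exp(w)exp(K)} it only makes sense with Grassmann-valued $w$. In that case transposing a scalar bilinear form picks up an extra sign from anticommuting the two components of $w$, so the effective sign is again $-1$ and the argument goes through. I would therefore state the lemma for bosons with commuting $w$, and note that for fermions it holds formally with Grassmann-valued $w$ by this extra sign. Alternatively, for fermions one can observe that the commutator expression degenerates: $[\widehat{w},\cdot]$ should then be an anticommutator.

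As a consistency check, the $n=0$ case reproduces $[\widehat{w},\widehat{w}]=0$. This matches the remark after Dynkin's formula that terms with repeated adjacent entries vanish. It confirms that only odd powers of $K$ survive in the scalar part of the BCH series, which is what produces the odd function $\beta(K)$ in~\eqref{eq:beta}.
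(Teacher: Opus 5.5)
Your proof is correct and follows essentially the same route as the paper: both use $K\Lambda=-\Lambda K^\tp$ to move $K^n$ across $\Lambda$, picking up $(-1)^n$, and then use the antisymmetry of the pairing, $w_1\Lambda w_2=-w_2\Lambda w_1$, to conclude $(1+(-1)^n)\,wK^n\Lambda w=0$. The only difference is that you make the sign $\epsilon$ explicit: $\Lambda$ is antisymmetric for bosons, while for fermions $G$ is symmetric and the extra sign comes from Grassmann-valued $w$. The paper absorbs this sign silently into its unified identity $[\widehat{w}_1,\widehat{w}_2]=-w_1\Lambda w_2=w_2\Lambda w_1$.
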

\begin{proof}
It has
\begin{align}\begin{split}
    wK^n\Lambda w&=w\Lambda\lambda K^n\Lambda w\\&=w\Lambda(-K^\tp)^nw\\
    &=(-1)^nw\Lambda(wK^n)\\wK^n\Lambda w&=-(-1)^nwK^n\Lambda w\\ (1+(-1)^n)wK^n\Lambda w&=0\,,
\end{split}\end{align}
which means $wK^n\Lambda w=0$ for even $n$.
\end{proof}

\subsection{Commutators with Lie group elements}\label{ap:B:Lie-com}

After Lie group elements formed from Lie algebra through exponentials, we need to have a full dictionary of the commutation laws between operators and Lie group elements. For a Lie group element $e^{K}=M\in\mathcal{G}$, we have
\begin{equation}
    \begin{aligned}
        \lambda &=M^\tp\lambda M\,,
        \\
        \lambda M&=M^{-\tp}\lambda\,,
        \\
        0&=\lambda K+K^\tp\lambda\,,
    \end{aligned}
    \qquad
    \begin{aligned}
        \Lambda&=M\Lambda M^\tp\,,
        \\
        M\Lambda&=\Lambda M^{-\tp}\,,
        \\
        0&=K\Lambda+\Lambda K^\tp\,.
    \end{aligned}
\end{equation}
Expanding in indices as $M^a{}_b=(e^{K})^a{}_b$, they are
\begin{subequations}
\begin{align}
    \lambda_{ab}&=(M^\tp)_a{}^c\lambda_{cd}M^d{}_b=M^c{}_a\lambda_{cd}M^d{}_b\,,
    \\
    \Lambda^{ab}&=M^a{}_c\Lambda^{cd}(M^\tp)_d{}^b=M^a{}_c\Lambda^{cd}M^b{}_d\,;
    \\
    \lambda_{ac}M^c{}_b&=(M^{-\tp}\lambda)_{ab}=(M^{-1})^c{}_a\lambda_{cb}\,,
    \\
    M^a{}_c\Lambda^{cb}&=(\Lambda M^{-\tp})^{ab}=\Lambda^{ac}(M^{-1})^b{}_c\,.
\end{align}
\end{subequations}
A direct result is
\begin{align}\begin{split}
    [\widehat{Mz_1},\widehat{z_2}]&=z_2\lambda Mz_1=z_2M^{-\tp}\lambda z_1
    \\&=M^{-1}z_2\lambda z_1=[\widehat{z_1},\widehat{M^{-1}z_2}]\,,
\end{split}\end{align}
and
\begin{align}\begin{split}
    [\widehat{w_1M},\widehat{w_2}]&=-w_1M\Lambda w_2=-w_1\Lambda M^{-\tp} w_2
    \\&=-w_1\Lambda w_2M^{-1}=[\widehat{w_1},\widehat{w_2M^{-1}}]\,.
\end{split}\end{align}
This means
\begin{align}
    [\widehat{w_1M},\widehat{w_2M}]=[\widehat{w}_1,\widehat{w}_2]=[\widehat{w_1e^K},\widehat{w_2e^K}]\,.
\end{align}
Now consider a function $F^a{}_b$ of $K^a{}_b$ or $M^a{}_b$. If $w_a=\tilde{z}_a=\iii\lambda_{ab}z^b$ and so $\iii z^a=\Lambda^{ab}w_b$, we have
\begin{align}\begin{split}\label{eq:ap:F-com}
    [\widehat{w},\widehat{wF}]&=wF\Lambda w=w_aF^a{}_c\Lambda^{cb}w_b
    \\
    &=\iii\lambda_{ab}z^bF^a{}_c\iii z^c
    \\
    &=z^b\lambda_{ba}F^a{}_cz^c=z\lambda Fz
    \\
    &=[\widehat{Fz},\widehat{z}]\,.
\end{split}\end{align}

\section{Proof of Lemma~\ref{lemma:gamma}}\label{ap:gamma}

Our goal is to prove the following expectation value for bosons
\begin{align}\begin{split}
    &\braket{J|\mathcal{D}(z)\mathcal{S}(T,1)|J}=\braket{0|e^{\widehat{z}}e^{\widehat{K}_+}|0}\\&=\det(\id-L^2)^{1/8}e^{-\frac{1}{4}zg(\id-L)z}e^{-\frac{\ii}{4}z\omega Lz}\,,
\end{split}\end{align}
where we treat the Gaussian state as a vacuum $\ket{0}=\ket{J,z}$ and $L=-Y_M$ is a linear map. 

\begin{proof}
This proof can be constructed by primarily adopting results and formulas in~\cite[Section III-A-3]{Hackl_2021}. The proof will also be performed for both bosons and fermions. We shall state several ingredients to support a smooth proof.

\textbf{The linear map.} A key linear map is introduced as~\cite[Eq. (157)]{Hackl_2021}, which has multiple variations
\begin{align}\begin{split}
    L&=\tanh(K_+)=\tanh({\textstyle\frac{1}{2}}\log\Delta_M)
    \\
    &=\frac{\Delta_M-\id}{\Delta_M+\id}=\id-\frac{2}{\id+\Delta_M}
    \\
    -L&=\frac{\id-\Delta_M}{\id+\Delta_M}=Y_{M}=Z_{M^{-1}}
    \,,
\end{split}\end{align}
where we use the fact
\begin{align}
    \tanh x=\frac{e^x-e^{-x}}{e^x+e^{-x}}=\frac{e^{2x}-1}{e^{2x}+1}\,.
\end{align}

\textbf{The switch tilde notation.} As a contrast to our unified notation, the underlying symmetry of bosons and fermions would also urge us to form a switch notation for convenience. These are summarized in table~\ref{tab:switch}, as an extension to table~\ref{tab:structures}. It is easy to notice from~\eqref{eq:Kaehler-relation} that $\widetilde{\Lambda}^{cd}\lambda_{db}=\ii J^c{}_b$ and $\ii \lambda_{ac}J^c{}_b=\widetilde{\lambda}_{ab}$, hence $\lambda_{ac}\widetilde{\Lambda}^{cd}\lambda_{db}=\widetilde{\lambda}_{ab}$. These observation is essential for the next paragraph.

\begin{table}[t]
    \centering
    \renewcommand{\arraystretch}{1.5}
\begin{tabular}{c|c c c} 
\hline
\hline
 \textbf{notation} & \textbf{structures} & \textbf{bosons} & \textbf{fermions} \\
\hline
    \multirow{2}{*}{\centering\textbf{unified}} & $\Lambda^{ab}$ & $\ii\Omega^{ab}$ & $G^{ab}$ \\
    & $\lambda_{ab}=(\Lambda^{-1})_{ab}$ & $-\ii\omega_{ab}$ & $g_{ab}$ \\
    \hline
    \multirow{2}{*}{\centering\textbf{switch}}& $\widetilde{\Lambda}^{ab}$ & $G^{ab}$ & $\ii\Omega^{ab}$ \\
    & $\widetilde{\lambda}_{ab}=(\widetilde{\Lambda}^{-1})_{ab}$ & $g_{ab}$ & $-\ii\omega_{ab}$ \\
\hline
\hline
\end{tabular}
\ccaption{The switch tilde notation}{A wide tilde notation means a switch between the structures for bosons and fermions. Salute to supersymmetry.}
\label{tab:switch}
\renewcommand{\arraystretch}{1}
\end{table}

\textbf{The 2-point function.} The two point function is also required from~\cite[Eq. (109)]{Hackl_2021}
\begin{subequations}
\begin{align}
    C_2^{ab}&=\tfrac{1}{2}(G^{ab}+\ii\Omega^{ab})\\
    (C_2^\tp)^{ab}&=\tfrac{1}{2}(G^{ab}-\ii\Omega^{ab})\,.
\end{align}
\end{subequations}
By using the switch notation, we can quickly notice the following facts
\begin{subequations}
\begin{align}
    2C_2^{cd}\lambda_{db}&=(\Lambda^{cd}+\widetilde{\Lambda}^{cd})\lambda_{db}=\delta^c{}_b+\ii J^c{}_b\,,\\
    \lambda_{ac}C_2^{cd}\lambda_{db}&=\tfrac{1}{2}\lambda_{ac}(\Lambda^{cd}+\widetilde{\Lambda}^{cd})\lambda_{db}=\tfrac{1}{2}(\lambda_{ab}+\widetilde{\lambda}_{ab})\,,\\
    2(C^\tp_2)^{cd}\lambda_{bd}&=(\Lambda^{cd}-\widetilde{\Lambda}^{cd})\lambda_{db}=\delta^c{}_b-\ii J^c{}_b\,,\\
    \lambda_{ac}(C^\tp_2)^{cd}\lambda_{bd}
    &=\tfrac{1}{2}\lambda_{ac}(\Lambda^{cd}-\widetilde{\Lambda}^{cd})\lambda_{db}=\tfrac{1}{2}(\lambda_{ab}-\widetilde{\lambda}_{ab})\,.
\end{align}
\end{subequations}
These will be used directly in the derivation of this proof.

\textbf{Decomposition of operators.} We need to decompose the quadrature operators $\hat{\xi}^a$ into its annihilation and creation components as $\hat{\xi}^a=\hat{\xi}^a_-+\hat{\xi}^a_+$, such that with respect to $\ket{0}$ we have
\begin{align}
    \hat{\xi}^a_-\ket{0}=0\quad\text{and}\quad\bra{0}\hat{\xi}^a_+=0\,.\label{eq:decomp-ladder}
\end{align}
This decomposition can also be adopted for the following operators
\begin{subequations}
\begin{align}
    \widehat{L}_\pm &=\tfrac{1}{2}\lambda_{ac}L^c{}_b\hat{\xi}^a_\pm\hat{\xi}^b_\pm\,,
    \\
    \widehat{z}_\pm &=-z^a\lambda_{ab}\hat{\xi}^b_\pm\,.
\end{align}
\end{subequations}
They follow the same ladder properties~\ref{eq:decomp-ladder}.

\textbf{The proof.} The following derivation is valid for bosons and fermions simultaneously.

Using~\cite[Eq. (155,156)]{Hackl_2021}, we can extract the purely squeezing part 
\begin{align}\begin{split}
    e^{\widehat{K}_+}&=e^{\widehat{L}_+}e^{\frac{1}{2}\lambda_{ac}\log(\id-L^2)^c{}_b(\hat{\xi}^a_+\hat{\xi}^b_--\frac{1}{4}\Lambda^{ab})}e^{\widehat{L}_-}\\
    e^{\widehat{K}_+}\ket{0}&=e^{\widehat{L}_+}e^{\frac{1}{2}\lambda_{ac}\log(\id-L^2)^c{}_b(\hat{\xi}^a_+\hat{\xi}^b_--\frac{1}{4}\Lambda^{ab})}\ket{0}\\
    &=e^{\widehat{L}_+}e^{\pm\frac{1}{8}\lambda_{ac}\log(\id-L^2)^c{}_b\Lambda^{ba}}\ket{0}\\
    &=e^{\widehat{L}_+}\ket{0}\det(\id-L^2)^{\pm1/8}\\
    &=e^{\widehat{L}_+}\ket{0}\braket{e^{\widehat{K}_+}}\,.
\end{split}\end{align}
where note that~\cite[Eq. (174)]{Hackl_2021} tells us
\begin{align}\label{eq:<eK+>}
    \braket{J|e^{\widehat{K}_+}|J}=\begin{cases}
    \det^{\frac{1}{8}}(\id-L^2) & \textbf{(bosons)}\\
    \det^{-\frac{1}{8}}(\id-L^2) & \textbf{(fermions)}
    \end{cases},
\end{align}
\cite[Eq.~(159,160)]{Hackl_2021} provide the solution for displacements
\begin{align}\begin{split}
    \mathcal{D}(z)&=e^{\widehat{z}}=e^{\widehat{z}_++\widehat{z}_-}=e^{\widehat{z}_+}e^{\frac{1}{2}z^c\lambda_{ca}(C^\tp_2)^{ab}z^d\lambda_{db}}e^{\widehat{z}_-}\\
    \bra{0}e^{\widehat{z}}&=\bra{0}e^{\frac{1}{4}z^a(\lambda_{ab}-\widetilde{\lambda}_{ab})z^b}e^{\widehat{z}_-}
    =e^{-\frac{1}{4}z^a\widetilde{\lambda}_{ab}z^b}\bra{0}e^{\widehat{z}_-}\,.
\end{split}\end{align}

Now, the problem reduces to
\begin{align}
    \braket{0|e^{\widehat{z}}e^{\widehat{K}_+}|0}=\braket{e^{\widehat{K}_+}}e^{-\frac{1}{4}z^a\widetilde{\lambda}_{ab}z^b}\braket{0|e^{\widehat{z}_-}e^{\widehat{L}_+}|0}\,,
\end{align}
which can be solved by using~\cite[Eq. (162)]{Hackl_2021}
\begin{align}\begin{split}
    &\braket{0|e^{\widehat{z}_-}e^{\widehat{L}_+}|0}\\
    &=\bra{0}e^{(-z\lambda)_a\hat{\xi}^a_-}e^{(\frac{1}{2}\lambda L)_{bc}\hat{\xi}^b_+\hat{\xi}^c_+}\ket{0}\\
    &=\bra{0}e^{\widehat{L}_++\hat{\xi}_+}e^{(z\lambda)_aC_2^{ab}(\frac{1}{2}\lambda L)_{bc}(C_2^\tp)^{cd}(z\lambda)_d}e^{\widehat{z}_-}\ket{0}\\
    &=e^{\frac{1}{2}z^a\lambda_{ae}C_2^{ef}\lambda_{fb}L^b{}_d(C_2^\tp)^{dg}\lambda_{cg}z^c}\\
    &=e^{\frac{1}{8}z^a(\lambda_{ab}+\widetilde{\lambda}_{ab})L^b{}_d(\delta^d{}_c-\ii J^d{}_c)z^c}\\
    &=e^{\frac{1}{8}z^a(\lambda_{ab}+\widetilde{\lambda}_{ab})(L^b{}_c+\ii J^b{}_dL^d{}_c)z^c}\\
    &=e^{\frac{1}{4}z^a(g_{ab}-\ii\omega_{ab})L^b{}_cz^c}=e^{\frac{1}{4}z(g-\ii\omega)Lz}\,.
\end{split}\end{align}
A combined result for bosons is
\begin{align}\begin{split}
    &\braket{0|e^{\widehat{z}}e^{\widehat{K}_+}|0}\\&=\det(\id-L^2)^{1/8}e^{\frac{1}{4}z^ag_{ab}(L^b{}_c-\delta^b{}_c)z^c}e^{-\frac{\ii}{4}z^a\omega_{ab}L^b{}_cz^c}\,,
\end{split}\end{align}
which then complete the proof.
\end{proof}

The result~\eqref{eq:gamma} in~\eqref{eq:phase_gamma} can also be derived by applying comparisons of our operator definitions with the definitions of the same operators in~\cite[Carlton M. Caves]{cc_a_dot_pdf}. For a single bosonic degree of freedom, $\hat{\xi}^a=(\hat{q},\hat{p})$. Specifically,~\cite[Eq. (15)]{cc_a_dot_pdf} gives displacement 
\begin{align}
    D(\hat{a},\alpha)=e^{\ii(\sqrt{2}\im{\alpha}\hat{q}-\sqrt{2}\re{\alpha}\hat{p})}
\end{align}
and~\cite[Eq. (248)]{cc_a_dot_pdf} gives squeezing
\begin{align}
S(r,\phi)&=e^{\frac{\ii}{2}r[(\hat{q}\hat{p}+\hat{p}\hat{q})\cos2\phi-(\hat{q}^2-\hat{p}^2)\sin2\phi]}\,.
\end{align}
In particular, we could adopt the following identity
\begin{align}\begin{split}
    \braket{\alpha|0}_r&=\braket{0|e^{\alpha^* \hat{a}-\alpha \hat{a}^\dagger}e^{\frac{r}{2}(\hat{a}^2-\hat{a}^{\dagger2})}|0}
    \\&=\braket{0|e^{\sqrt{2}\ii(\ii\im{\alpha}\hat{q}+\re{\alpha}\hat{p})}e^{\ii\frac{r}{2}(\hat{q}\hat{p}+\hat{p}\hat{q})}|0}
    \\
    &=\frac{e^{-|\alpha|^2/2}}{\sqrt{\cosh{r}}}e^{-\frac{1}{2}(\alpha^*)^2\tanh{r}}\,,
\end{split}\end{align}
from a special form of~\cite[Eq. (275)]{cc_a_dot_pdf} when $\beta=\alpha$ and $\phi=0$. Note that here $r=\sqrt{-\det(K)}$. This also allows us to determine the function of $\gamma(M,z)$.

\section{Proof of Lemma~\ref{lemma:BCH:exp(w)exp(K)}}\label{ap:ab}

We introduce the following function in (\ref{eq:ab})
\begin{subequations}
\begin{align}
    \alpha(K)&=\frac{K}{e^{K}-\id}\,,\\
    \beta(K)&=\frac{1}{4}\frac{K-\sinh{K}}{\id-\cosh{K}}\,.
\end{align}
\end{subequations}
The function $\alpha(K)$ has the properties
\begin{align}
    \alpha^{-1}(K)=\frac{e^{K}-\id}{K}=\sum_{n=0}^\infty\frac{K^n}{(n+1)!}=\sum_{n=1}^\infty\frac{K^{n-1}}{n!}\,,\label{eq:series-alpha}
\end{align}
and 
\begin{align}
    e^K\alpha(K)=\alpha(K)e^K=\frac{K}{\id-e^{-K}}=\alpha(-K)\,,\label{eq:exp(alpha)}
\end{align}
as well as
\begin{align}
    \alpha(K)\alpha(-K)=\frac{K}{e^{K}-\id}\frac{K}{\id-e^{-K}}=\frac{1}{2}\frac{K^2}{\cosh K-\id}\,.\label{eq:alpha.alpha}
\end{align}
Finally, when the quadratic operator is zero, we have
\begin{subequations}
\begin{align}
    \lim_{K\to0}\alpha(K)&=\id\,,\\
    \lim_{K\to0}\beta(K)&=0\,,
\end{align}
\end{subequations}
satisfy the required properties. 

Now we would like to prove the Lemma~\ref{lemma:BCH:exp(w)exp(K)}.
\begin{proof}
The commutator identities tell us that the commutator between a linear operator $\widehat{w}$ and a quadratic operator $\widehat{K}$ would give us another linear operator as $[\widehat{w},\widehat{K}]=\widehat{wK}$. While the commutator between two linear operators would give us a scalar function as $[\widehat{w}_1,\widehat{w}_2]=w_2\Lambda w_1$, which then commutes with anything. The BCH-Dynkin formula (\ref{eq:Dynkin}) tells us that $\log(e^{\widehat{K}}e^{\widehat{w}})$ involves all possible commutators of all orders between $\widehat{K}$ and $\widehat{w}$. However, above arguments indicate that the order of resulting operators depends on the order of involved $\widehat{w}$. The highest-order operator is clearly just the quadratic $\widehat{K}$, which corresponds to the zeroth order of $\widehat{w}$. Thus we can write it as
\begin{align}
    \log(e^{\widehat{K}}e^{\widehat{w}})=\widehat{K}+\widehat{v}(K,w)+f(K,w)\,,
\end{align}
for some linear operator $\widehat{v}$ and scalar function $f$, correspond to the $\widehat{w}$ and $\widehat{w}^2$ terms, respectively. It follows from Dynkin's formula that there is only two types of linear terms will survive, namely
\begin{subequations}
\begin{align}
    [[\widehat{K}]^r,\widehat{w}]&=\widehat{w(-K)^r}\propto\widehat{wK^r}\,,\\
    [[\widehat{K}]^{r-1},[\widehat{w},\widehat{K}]]&=-\widehat{w(-K)^r}\propto\widehat{wK^r}\,.
\end{align}
\end{subequations}
Thus the linear terms in $\log(e^{\widehat{K}}e^{\widehat{w}})$ must have the form
\begin{align}
    \widehat{v}=\sum_{r=0}^\infty\alpha_r\widehat{wK^r}=\widehat{w\alpha(K)}\,,
\end{align}
for some power series coefficients $\alpha_r$. On the other hand, by the same Dynkin's formula, the only non-zero scalar terms must have the form
\begin{subequations}
\begin{align}
    [\widehat{w},[[\widehat{K}]^s,\widehat{w}]]&\propto[\widehat{w},\widehat{wK^s}]\,,\\
    [\widehat{w},[[\widehat{K}]^{s-1},[\widehat{w},\widehat{K}]]]&\propto[\widehat{w},\widehat{wK^s}]\,.
\end{align}
\end{subequations}
Thus the scalar function takes the form
\begin{align}\begin{split}
    f&=\sum_{s=0}^\infty\beta_s[\widehat{w},\widehat{wK^s}]=[\widehat{w},\widehat{w\beta(K)}]\\&=w\beta(K)\Lambda w=w_a\beta(K)^a{}_c\Lambda^{cb}w_b\,,
\end{split}\end{align}
for some power series coefficients $\beta_s$. 

To find $\alpha(K)$, the formalism $e^{\widehat{K}}e^{\widehat{w}}=e^{\widehat{K}+\widehat{v}+f}$ gives 
\begin{align}\begin{split}
    &(e^{\widehat{K}}e^{\widehat{w}})^{-1}\hat{\xi}^a(e^{\widehat{K}}e^{\widehat{w}})
    \\
    &=e^{-(\widehat{K}+\widehat{v}+f)}\hat{\xi}^ae^{\widehat{K}+\widehat{v}+f}
    \\
    &=e^{-(\widehat{K}+\widehat{v})}\hat{\xi}^ae^{\widehat{K}+\widehat{v}}
    \\
    &=\sum_{n=0}^\infty\frac{[\hat{\xi}^a,[\widehat{K}+\widehat{v}]^n]}{n!}
    \\
    &=(e^{K})^a{}_b\hat{\xi}^b+\sum_{n=1}^\infty\frac{(K^{n-1})^a{}_b}{n!}\frac{\Lambda^{bc}}{\iii}v_c
    \\
    &=(e^{K})^a{}_b\hat{\xi}^b+\tfrac{1}{\iii}\alpha^{-1}(K)^a{}_b\Lambda^{bc}v_c\,.
\end{split}\end{align}
Where the series (\ref{eq:series-alpha}) gives
\begin{align}
    \alpha^{-1}(K)=\frac{e^K-\id}{K}\,.
\end{align}
Recall that $\iii[\hat{\xi}^a,\widehat{v}]=\Lambda^{ab}v_b$. Hence we need to solve
\begin{align}\begin{split}
    \alpha^{-1}(K)^a{}_b\Lambda^{bc}v_c&=(e^{K})^a{}_b\Lambda^{bc}w_c\\
    \Lambda^{dc}v_c&=\alpha(K)^d{}_a(e^{K})^a{}_b\Lambda^{bc}w_c\\
    \lambda_{ad}\Lambda^{dc}v_c&=\lambda_{ad}\alpha(-K)^d{}_b\Lambda^{bc}w_c\\
    v_a&=w_c\alpha(K)^c{}_a\,,
\end{split}\end{align}
where we use the property of $\alpha$ (\ref{eq:exp(alpha)}) and the formula (\ref{eq:okof}). 

To find the explicit form of $\beta(K)$, we try to solve differential equations. Define
\begin{subequations}
\begin{align}
    A(t)&=e^{\widehat{K}}e^{t\widehat{w}}\,,
    \\
    B(t)&=e^{\widehat{K}+t\widehat{v}+t^2f}=e^{\widehat{K}+t\widehat{v}}e^{t^2f}\,.
\end{align}
\end{subequations}
Where for brevity, we save the wide-hats for $A$ and $B$. The differential equation of $A$ is clearly $A^{-1}(t)\dot{A}(t)=\widehat{w}$. Using (\ref{eq:adxy}), we can evaluate
\begin{align}\begin{split}
    &e^{-\widehat{K}-t\widehat{v}}\frac{\mathrm{d}}{\mathrm{d}t}(e^{\widehat{K}+t\widehat{v}})=\sum_{n=0}^\infty\frac{[\widehat{v},[\widehat{K}+t\widehat{v}]^n]}{(n+1)!}\\
    &=\sum_{n=0}^\infty\frac{[\widehat{v},[\widehat{K}]^n]}{(n+1)!}-tv\sum_{n=2}^\infty\frac{K^{n-1}}{(n+1)!}\Lambda v\\
    &=\sum_{n=0}^\infty\frac{\widehat{v K^n}}{(n+1)!}-tv\sum_{m=1}^\infty\frac{K^{2m-1}}{(2m+1)!}\Lambda v\\
    &=\widehat{v\alpha^{-1}(K)}-tv\frac{\sinh K-K}{K^2}\Lambda v\\
    &=\widehat{w}-tw\alpha(K)\frac{\sinh K-K}{K^2}\alpha(-K)\Lambda w\\
    &=\widehat{w}-tw\frac{1}{2}\frac{K-\sinh K}{\id-\cosh K}\Lambda w\,.
\end{split}\end{align}
In the third line, we relabel $n\to2m$ in order to remove all even terms as $v K^{2m}\Lambda v=0$ for all $m\in\mathbb{Z}$ by Lemma~2. We use (\ref{eq:alpha.alpha}) in the last line. The first sum gives a linear term while the second sum gives a scalar function. Together, to match the differential equation we must have
\begin{align}\begin{split}
    \frac{\dot{B}(t)}{B(t)}=\widehat{w}-tw\frac{1}{2}\frac{K-\sinh K}{\id-\cosh K}\Lambda w+2tw\beta(K)\Lambda w=\widehat{w}\,,
\end{split}\end{align}
this clearly shows
\begin{align}
    \beta(K)=\frac{1}{4}\frac{K-\sinh K}{\id-\cosh K}\,.
\end{align}
For the other case, we have
\begin{align}\begin{split}
    \log(e^{\widehat{w}}e^{\widehat{K}})&=\log(e^{\widehat{K}}e^{\widehat{we^K}})\\
    &=\widehat{K}+\widehat{we^K\alpha(K)}+[\widehat{we^K},\widehat{we^K\beta(K)}]\\
    &=\widehat{K}+\widehat{w\alpha(-K)}+[\widehat{w},\widehat{w\beta(K)}]\,,
\end{split}\end{align}
where we use the fact that $e^K\alpha(K)=\alpha(-K)$ again from (\ref{eq:exp(alpha)}).
\end{proof}

\section{Fermionic squeezing operators}\label{ap:fermion-sq}

In this appendix, we check the normalization condition $w_aG^{ab}w_b=2$ as well as~\ref{eq:sq_mw} defined for fermions in Section~\ref{sec:fermions}. The normalization condition is given by
\begin{align}\begin{split}
    \id &=\mathcal{S}(M_w)\mathcal{S}(M_w)
    \\
    &=w_a\hat{\xi}^aw_b\hat{\xi}^b=w_a\hat{\xi}^a\hat{\xi}^bw_b
    \\
    &=\tfrac{1}{2}(w_a\hat{\xi}^a\hat{\xi}^bw_b+w_b\hat{\xi}^b\hat{\xi}^aw_a)
    \\
    &=\tfrac{1}{2}w_a(\hat{\xi}^a\hat{\xi}^b+\hat{\xi}^b\hat{\xi}^a)w_b
    \\
    &=\tfrac{1}{2}w_a\{\hat{\xi}^a,\hat{\xi}^b\}w_b=\tfrac{1}{2}w_aG^{ab}w_b\id\,.
\end{split}\end{align}
It follow that we require $w_aG^{ab}w_b=2$. 

To check the consistency of (\ref{eq:sq_mw}) with the definition (\ref{eq:fix-S(M)-up-to-sign}), one can show
\begin{align}\begin{split}
    \mathcal{S}^\dagger(M_w)\hat{\xi}^a\mathcal{S}(M_w)&=w_c\hat{\xi}^c\hat{\xi}^aw_b\hat{\xi}^b
    \\
    &=w_c\{\hat{\xi}^c,\hat{\xi}^a\}w_b\hat{\xi}^b-\hat{\xi}^aw_c\hat{\xi}^cw_b\hat{\xi}^b
    \\
    &=w_cG^{ca}w_b\hat{\xi}^b-\hat{\xi}^a\id
    \\
    &=(w_cG^{ca}w_b-\delta^a{}_b)\hat{\xi}^b
    \\
    &=(M_w)^a{}_b\hat{\xi}^b\,.
\end{split}\end{align}
This complete the characterization of fermionic systems.

\section{Review of bosonic phase formula}\label{ap:psi}

Our Result~2 for the inhomogeneous case enables us to compute the complex phase $\Phi_J[e^{-\ii\hat{H}}]$ where $\hat{H}$ is a GQH as in~\eqref{eq:GQH}. It crucially relies on Result 3a derived in~\cite{Hackl_2024}, which considers the equivalent problem for a PQH, \ie where $-\ii \hat{H}=\widehat{K}$ for some Lie algebra element $K\in\mathfrak{sp}(2N,\mathbb{R})$.

While we refer the reader to~\cite{Hackl_2024} for the proof, we will present several self-contained versions of this result, covering specific simpler cases and the general case. In short, we ask how to compute
\begin{align}
    \Phi_J[e^{\widehat{K}}]=\frac{\braket{J|e^{\widehat{K}}|J}}{|\braket{J|e^{\widehat{K}}|J}|}\,.
\end{align}
Put differently, given a symplectic generator $K\in\mathfrak{sp}(2N,\mathbb{R})$ associated with a PQH $\hat{H}$ as $\widehat{K}=-\ii \hat{H}$, the result provides an explicit closed‐form expression for the ambiguous global phase $\arg\braket{J|e^{\widehat{K}}|J}$ for the bosonic evolution.

The difficulty of evaluating $\Phi_J[e^{\widehat{K}}]$ depends on the Jordan–Chevalley decomposition of $K$, namely there exists a unique decomposition
\begin{align}
    K=K_I+K_R+K_N\,,
\end{align}
where $K_N$ is the non-diagonalizable nilpotent part of $K$, while $K_I$ has purely imaginary eigenvalues and $K_R$ purely real eigenvalues. All three parts commute with each other. The main cause of $\Phi_J[e^{\widehat{K}}]$ \emph{wrapping around} the origin of the complex plane is $K_I$, which is why it is also useful to define
\begin{align}
    K'=K_R+K_N
\end{align}
for later use. In the following, we will now use the results of~\cite{Hackl_2024} to give \emph{simple recipes} for computing $\Phi_J[e^{\widehat{K}}]$ both numerically and analytically for the different cases.

\textbf{Numerical evaluation via evolution.} A quick numerical way to evaluate the complex phase is to use Result~1 from~\cite{Hackl_2024} stating
\begin{align}\label{eq:num-track}
    \braket{J|e^{\widehat{K}}|J}^2=\left(\overline{\det}\frac{e^{K}-Je^{K}J}{2}\right)^{-1}
\end{align}
for bosons, where $\overline{\det}$ was introduced in~\eqref{eq:def-trbar-detbar}. We would like to extract square root of this expression, which yields a sign ambiguity. However, this ambiguity can be easily lifted when we introduce an evolution parameter $t$ and compute numerically the complex phase of $\braket{J|e^{t\widehat{K}}|J}^2$ as a function of $t$. It is clear that the complex phase vanishes for $\braket{J|J}=1$ at $t=0$ and so we can track which sign is correct when taking the square root. 

\begin{table*}
  \caption{Blocks for the real Jordan normal form of the symplectic generator $K$. We denote $\lambda = \mu + \ii \nu$ (with  $\mu,\nu \in \mathbb{R}$) and $\sigma=\alpha_{\lambda}(e, \tilde{e})$, where $e$ is the generating generalized eigenvector (gGEV) associated with the block. Note that in $\mathfrak{c}=6$, $\ii\sigma \in \mathbb{R}$.}\label{TAB:NormalFormList}
	\begin{tabular}{|C{0.03\textwidth} | C{0.27\textwidth} | C{0.27\textwidth} |C{0.27\textwidth} | C{0.09\textwidth} |}
		\hline
		$\mathfrak{c}$ & $I_I^{(\mathfrak{c})} (\lambda, D)$ & $I_R^{(\mathfrak{c})}(\lambda,D)$ & $I_L^{(\mathfrak{c})}(\lambda,D)$ & dimension \\
		\hline
		1 
		&
            $\begin{pmatrix}
		\mu &  &  &   &  \\
		1 & \mu &  &  &  \\ 
		& \ddots & \ddots &  &  \\ 
		&  & & 1 & \mu \\
		\end{pmatrix}$
		&
		$\mathbb{0}$
		&
		$\mathbb{0}$
		&
		$D$
		\\
		\hline
		2
		&
		$\begin{pmatrix}
		\mu & \nu &  &  &  &  &  &  \\
		-\nu & \mu &  &  &  &  &  &  \\
		1 & 0 & \mu & \nu &  &  &  &  \\
		0 & 1 & -\nu & \mu &  &  &  &  \\
		&  &  & \ddots & \ddots &   &  &  \\
		&  &  &  & 1 & 0 & \mu & \nu \\
		&  &  &  & 0 & 1 & -\nu & \mu \\
		\end{pmatrix}$
		&
		$\mathbb{0}$
		&
		$\mathbb{0}$
		&
		$2D$
		\\
		\hline
		3
		&
		$\sigma
		\begin{pmatrix}
		0 &  &  &   \\ 
		1 & 0 &  &  \\ 
		& \ddots & \ddots & \\ 
		&   & 1 & 0\\
		\end{pmatrix} $
		&
		$\mathbb{0}$
		&
		$\sigma
		\begin{pmatrix}
		0 &  &  &   \\ 
		& 0 &  &  \\ 
		&  & \ddots & \\ 
		&   &  & (-1)^{D/2}\\
		\end{pmatrix} $
		&
		$D/2$ (integer)
		\\
		\hline
		4
		&
		$\begin{pmatrix}
		0 &  &  &   \\ 
		1 & 0 &  &  \\ 
		& \ddots & \ddots & \\ 
		&   & 1 & 0\\
		\end{pmatrix} $
		&
		$\mathbb{0}$
		&
		$\mathbb{0}$
		&
		$D$ (odd)
		\\
		\hline
		5
		&
		$\mathbb{0}$
		&
		$\sigma
		\begin{pmatrix}
		&  & &  &  & \nu\\
		&  &  & & \nu & 1\\
		&  &  & & -1 & \\
		&  &\iddots  & \iddots &  & \\
		& \nu & -1 &  & & \\
		\nu & 1 &  & &  & \\
		\end{pmatrix}$
		& 
		$\sigma
		\begin{pmatrix}
		&  & &  &  -1 & -\nu\\
		&  &  & 1& -\nu & \\
		&  &\iddots  & \iddots &  & \\
		& 1 &  & & & \\
		-1 & -\nu &  &  & & \\
		-\nu & &  & &  & \\
		\end{pmatrix}$
		&
		$D$ (even)
		\\
		\hline
		6
		&
		$\begin{pmatrix}
		0 &  &  &   \\ 
		1 & 0 &  &  \\
		& \ddots & \ddots & \\ 
		&   & 1 & 0\\
		\end{pmatrix} $
		&
		$ \ii \sigma
		\begin{pmatrix}
		&  &  &  & \nu\\
		&  &  & -\nu & \\
		&  & \iddots & & \\
		& -\nu &  &  & \\
		\nu &  &  &  & \\
		\end{pmatrix} $
		&
		$\ii \sigma
		\begin{pmatrix}
		&  &  &  & -\nu\\
		&  &  & \nu & \\
		&  & \iddots &  & \\
		& \nu &  &  & \\
		-\nu &  &  &  & \\
		\end{pmatrix}$
		&
		$D$ (odd)
		\\
		\hline
	\end{tabular}
\end{table*}

\textbf{Closed form for $K=K_I$ if $K_I\Omega>0$.} Given a diagonalizable generator $K=K_I$ with purely imaginary eigenvalues and the property that $K_I\Omega$ is positive-definite, we can follow these steps:
\begin{enumerate}
    \item Find a valid complex structure $\tilde{J}$, such $\tilde{J}\Omega \tilde{J}^\intercal=\Omega$, $\Omega \tilde{J}>0$ and $[K_I,\tilde{J}]=0$. For this, we can choose
    \begin{align}\label{eq:JtildeKI}
        \tilde{J}=K_I |K_I|^{-1}\,,
    \end{align}
    where $|K_I|$ refers to replacing all eigenvalues of $K_I$ by their absolute value. Note that $|K_I|$ and $K_I$ commute. It is easy to see that $\tilde{J}^2=-\id$, as multiplying with $|K_I|^{-1}$ changes all eigenvalues of $K_I$ to be $\pm \ii$.
    \item Compute the transformation $T=\sqrt{-\tilde{J}J}$, where $-\tilde{J}J$ has purely positive eigenvalues, so that the square root is well-defined.
    \item Evaluate the complex phase as
    \begin{align}
        \arg\Phi_J[e^{\widehat{K}}]=\tfrac{1}{4}\Tr(K_I\tilde{J})+\tfrac{1}{2}\eta(T^{-1}e^{K},T)\,.
    \end{align}
\end{enumerate}
\emph{If $K_I\Omega$ is not positive-definite, the expression for $\tilde{J}$ in~\eqref{eq:JtildeKI} will not satisfy $\tilde{J}\Omega>0$ and thus not be a valid complex structure. However, this can be easily fixed by flipping the signs of some of the eigenvalues of $K_I$.}

\textbf{General case.} Given a general generator $K$, we can follow the steps explained  to write
\begin{enumerate}
    \item Follow the steps explained in~\cite{Kustura:2019PhRvA..99b2130K} to find a change of basis $M$, such that $MKM^{-1}$ takes the form
    \begin{align}\label{KN}
    \begin{pmatrix} O_I & O_R \\ O_L & -O_I^\intercal  \end{pmatrix}\,,
    \end{align}
    where $O_\chi \in \mathbb{R}^{N\times N} $ ($\chi=I,L,R$), $O_R= O_R^\intercal$, and $O_L= O_L^\intercal$. The matrices $O_\chi$ can each be expressed as a direct sum of blocks over different eigenvalue types
    \begin{align}\label{eq:MKMform}
       O_\chi \equiv  O_\chi^{(\mathfrak{\mathcal{R}})} \oplus O_\chi^{(\mathfrak{\mathcal{C}})} \oplus O_\chi^{(0)} \oplus O_\chi^{(\mathfrak{\mathcal{I}})}\,,
    \end{align}
    where
    \begin{align}
    \begin{split}
    O_\chi^{(\mathfrak{\mathcal{R}})} &\equiv \bigoplus_{\mathcal{R}_i   }  \spare{\Moplus_{j=1}^{m_i} I_{\chi}^{(1)}(\lambda_i,D_{ij})},\\
    O_\chi^{(\mathfrak{\mathcal{C}})} &\equiv \bigoplus_{\mathcal{C}_i }  \spare{\Moplus_{j=1}^{m_i} I_{\chi}^{(2)}(\lambda_i,D_{ij})},\\
    O_\chi^{(0)} &\equiv  \bigoplus_{j=1}^{l_0} I_{\chi}^{(3)}(0,D_{0j}) \bigoplus_{j=1}^{n_0} I_{\chi}^{(4)}(0,D_{0j}),\\
    O_\chi^{(\mathcal{I})} &\equiv \bigoplus_{\mathcal{I}_i  }  \spare{ \Moplus_{j=1}^{l_i} I_{\chi}^{(5)}(\lambda_i,D_{ij}) \Moplus_{j=1}^{n_i} I_{\chi}^{(6)}(\lambda_i,D_{ij})}\hspace{-1cm}
    \end{split}
    \end{align}
    with eigenvalues $\lambda_i$ and associated Jordan block dimensions $D_{ij}$ for the Jordan block $j$. The matrices $I^{(\mathfrak{c})}_\chi(\lambda,D)$ can be read from table~\ref{TAB:NormalFormList} where $\mathfrak{c}=1,\dots,6$ labels six distinct cases.
    \item Define the complex structure $\tilde{J}$, such that $M\tilde{J}M^{-1}$ takes the matrix form
    \begin{align}
        \begin{pmatrix}
            0 & \id\\
            -\id & 0
        \end{pmatrix}
    \end{align}
    in the same basis, where $MKM^{-1}$ takes the form~\eqref{eq:MKMform}.
    \item Compute the transformation $T=\sqrt{-\tilde{J}J}$, where $-\tilde{J}J$ has purely positive eigenvalues, so that the square root is well-defined.
    \item Evaluate the complex phase as
    \begin{align}
\begin{split}\label{eq:main-bosons}
    \arg\braket{J|e^{\widehat{K}}|J}&=\tfrac{1}{4}\Tr(K_I\tilde{J})+\tfrac{1}{2}\eta(T^{-1},e^{K})\hspace{-7mm}\\
    &\quad -\arg\overline{\det}\sqrt{T^{-1}\tfrac{e^{K'}-\tilde{J}e^{K'}\tilde{J}}{2}T}\hspace{-7mm}\\
    &\quad +\tfrac{1}{2}\eta(T^{-1}e^{K},T)\,.
\end{split}
\end{align}
\end{enumerate}

In summary, Result 3a from~\cite{Hackl_2024} using the standard form from~\cite{Kustura:2019PhRvA..99b2130K} gives a robust and general expression for the phase of the Gaussian vacuum overlap $\arg\braket{J|e^{\widehat{K}}|J}$ in the bosonic case. Its careful decomposition and cocycle accounting make the complex phase accessible for analytical calculations. The relevant expressions simplify for many generators $K$ encountered in practice, such as those representing physical Hamiltonians. If one is instead mostly interested in numerically evaluating the phase, one can track it efficiently using the formula~\eqref{eq:num-track}.

\bibliography{references.bib}
\end{document}